\newcommand{\caC}{\mathcal{C}}
\newcommand{\caD}{\mathcal{D}}
\newcommand{\caV}{\mathcal{V}}
\newcommand{\bbN}{\mathbb{N}}
\newcommand{\bbR}{\mathbb{R}}
\newcommand{\bmb}{{\bm b}}
\newcommand{\bme}{{\bm e}}
\newcommand{\bmp}{{\bm p}}
\newcommand{\bmw}{{\bm w}}
\newcommand{\bmx}{{\bm x}}
\newcommand{\bmy}{{\bm y}}
\newcommand{\bmz}{{\bm z}}
\newcommand{\bmone}{{\bm 1}}
\newcommand{\bmzero}{{\bm 0}}
\newcommand{\bmeta}{{\bm \eta}}
\newcommand{\bmphi}{{\bm \varphi}}
\newcommand{\set}[1]{\{#1\}}
\newcommand{\bigset}[1]{\bigl\{#1\bigr\}}
\newcommand{\Bigset}[1]{\Bigl\{#1\Bigr\}}
\newcommand{\abs}[1]{\lvert #1\rvert}
\newcommand{\norm}[1]{\lVert #1\rVert}
\newcommand{\inprod}[1]{\langle #1\rangle}
\newcommand{\argmax}{\mathop{\mathrm{argmax}}}
\newcommand{\Lovasz}{Lov{\'a}sz\xspace}
\begin{document}
\title{Polynomial-Time Algorithms for Submodular Laplacian Systems }
\author{%
    \name Kaito Fujii \email kaito\_fujii@mist.i.u-tokyo.ac.jp 
       % \addr Graduate School of Information Science and Technology \\
        % The University of Tokyo,\\
        % 7-3-1 Hongo, Bunkyo-ku, Tokyo 113-8656, Japan
       \AND
    \name Tasuku Soma \email tasuku\_soma@mist.i.u-tokyo.ac.jp \\
       \addr Graduate School of Information Science and Technology \\
        The University of Tokyo,\\
        7-3-1 Hongo, Bunkyo-ku, Tokyo 113-8656, Japan
       \AND
       \name Yuichi Yoshida \email yyoshida@nii.ac.jp \\
       \addr National Institute of Informatics, \\
       2-1-2 Hitotsubashi, Chiyoda-ku, Tokyo 101-8430, Japan
       %Preferred Infrastructure, Inc.\\
       %1-6-1 Otemachi, Chiyoda-ku, Tokyo 100-0004, Japan
       }
\editor{}

\maketitle
\begin{abstract}
  Let $G=(V,E)$ be an undirected graph, $L_G\in \mathbb{R}^{V \times V}$ be the associated Laplacian matrix, and ${\bm b} \in \mathbb{R}^V$ be a vector.
  Solving the Laplacian system $L_G {\bm x} = {\bm b}$ has numerous applications in theoretical computer science, machine learning, and network analysis.
  Recently, the notion of the Laplacian operator $L_F:\mathbb{R}^V \to 2^{\mathbb{R}^V}$ for a submodular transformation $F:2^V \to \mathbb{R}_+^E$ was introduced, which can handle undirected graphs, directed graphs, hypergraphs, and joint distributions in a unified manner.
  In this study, we show that the submodular Laplacian system $L_F({\bm x}) \ni {\bm b}$ can be solved in polynomial time.
  Furthermore, we also prove that even when the submodular Laplacian system has no solution, we can solve its regression form in polynomial time.
  Finally, we discuss potential applications of submodular Laplacian systems in machine learning and network analysis.
\end{abstract}

%!TEX root=./main.tex

\section{Introduction}

In spectral graph theory, the \emph{Laplacian matrix} (or simply \emph{Laplacian}) $L_G = D_G-A_G$ associated with an undirected graph $G=(V,E)$ is an important object, where $D_G \in \bbR^{V \times V}$ is a diagonal matrix with the $(v,v)$-th element equal to the degree of $v \in V$ and $A_G \in \bbR^{V \times V}$  is the adjacency matrix of $G$.
Using the Laplacian $L_G$ of an undirected graph $G$, one can extract various information regarding $G$, such as commuting time of random walks, maximum cut, and diameter of the graph.
Further, a cornerstone result in spectral graph theory is Cheeger's inequality~\citep{Alon:1986gz,Alon:1985jg}, which associates the community structure of $G$ with the second smallest eigenvalue of $L_G$.
See~\citet{chung1997spectral} for a survey on this area.

An important problem considering Laplacians for undirected graphs is solving the \emph{Laplacian system} $L_G \bmx = \bmb$ for a graph $G=(V,E)$ and a vector $\bmb\in \bbR^V$;
in terms of electrical circuits, this problem can be interpreted as follows:
We regard each edge $e\in E$ as a resistance of $1\Omega$ and each vertex $v \in V$ as a joint connecting these resistances.
Then, the solution $\bmx$ provides the electric potential at the vertices when a current of $\bmb(v)$A flows through each $v \in V$.
Based on this interpretation, it is evident that the Laplacian system has a solution only when $\sum_{v \in V}\bmb(v) = 0$, that is, the amount of inflow is equal to that of outflow.
%Then, $\bmx$
%the effective resistance $R_{uv}$ is equal to the voltage difference between $u$ and $v$ when we flow a current of $1A$ from $u$ to $v$.

Solving Laplacian systems has numerous applications such as in simulating random walks~\citep{Cohen:2016gm}, generating spanning trees~\citep{Kelner:2009ca}, constructing sparsifiers~\citep{Spielman:2011ir}, faster interior point methods~\citep{Daitch:2008je}, semi-supervised learning~\citep{Joachims:2003vr,Zhou:2003uv,Zhu:2003tb}, and network analysis~\citep{Brandes:2005bk,Hayashi:2016ub,Mavroforakis:2015to,Newman:2005em}.
For more applications of solving Laplacian systems, refer  to~\cite{Vishnoi:2013fz}.

Although the concept of Laplacians for undirected graphs was first introduced in the 1980s, it is only recently that the notions of Laplacian operators for directed graphs~\citep{Yoshida:2016ig} and hypergraphs~\citep{Louis:2015tg} have been  proposed and corresponding Cheeger's inequalities have been obtained.\footnote{Precisely speaking, several Laplacian \emph{matrices} have been proposed for directed graphs~\citep{Chung:2007bj,Li:2012tk,Zhou:2005bua}; however, these Laplacians are well-defined only for strongly connected directed graphs.}
It is important to note that these operators are no longer linear, and therefore, cannot be expressed using matrices.
Furthermore, recently,~\cite{Yoshida:2017zz} showed that these operators can be systematically constructed using the cut function $F\colon 2^V \to \bbR_+^E$ associated with an undirected graph, directed graph, or hypergraph, where $F_e\colon S \mapsto F(S)(e)$ is equal to one if the edge $e \in E$ is cut, and zero otherwise.
A key property used in the analysis of the constructed operators is the \emph{submodularity} of the cut function, which is given by
\[
  F_e(S) + F_e(T) \geq F_e(S \cup T) + F_e(S \cap T)
\]
for every $S,T \subseteq V$.
Indeed, this construction can be applied to any \emph{submodular transformation} $F\colon2^V \to \bbR_+^E$, that is, each function $F_e\colon S \mapsto F(S)(e)$ is non-negative submodular, and corresponding Cheeger's inequality was obtained when $F(\emptyset)=F(V)=\bmzero$~\citep{Yoshida:2017zz}.
In what follows, we always assume that a submodular transformation is \emph{normalized}, that is, $F(\emptyset)=\bmzero$, and $F(V)=\bmzero$ to avoid technical triviality.
% We do not impose the constraint $F(V)=\bmzero$ in this work. \ynote{is this true?}

\paragraph{Submodular Laplacian Systems.}

Because solving Laplacian systems based on undirected graphs have numerous applications, it is natural to consider solving Laplacian systems for general submodular transformations, which are referred to as \emph{submodular Laplacian systems}.
It should be noted that the Laplacian operator $L_F\colon\bbR^V \to 2^{\bbR^V}$ associated with a submodular transformation $F\colon2^V \to \bbR_+^E$ is multi-valued, and therefore, the question here is computing $\bmx \in \bbR^V$ such that $L_F(\bmx) \ni \bmb$ for a given vector $\bmb \in \bbR^V$.
As the first contribution of our study, we show the tractability of submodular Laplacian systems:
\begin{theorem}\label{the:intro-system}
  Let $F\colon2^V \to \bbR_+^E$ be a submodular transformation and $\bmb \in \bbR^V$ be a vector.
  Then, we can compute $\bmx \in \bbR^V$
  \[
    L_F(\bmx) \ni \bmb
  \]
  in polynomial time, if it exists.
%  Moreover, if each function $F_e\colon2^V \to \bbR$ is of constant arity, the time complexity can be improved to $O(|V|^3)$.
\end{theorem}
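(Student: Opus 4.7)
The plan is to recast $L_F(\bmx) \ni \bmb$ as the first-order optimality condition of a convex optimization problem and then solve it via the ellipsoid method. Write $f_e\colon \bbR^V \to \bbR$ for the \Lovasz extension of the submodular function $F_e$. Because each $F_e$ is non-negative and satisfies $F_e(\emptyset)=0$, the base polytope $B(F_e)$ contains $\bmzero$, so $f_e(\bmx) = \max_{\bmy \in B(F_e)}\inprod{\bmy,\bmx}$ is non-negative and convex. Consequently $f_e(\bmx)^2$ is convex, and
\[
\Phi(\bmx) \;=\; \tfrac{1}{2}\sum_{e\in E} f_e(\bmx)^2 \;-\; \inprod{\bmb, \bmx}
\]
is convex in $\bmx$. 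By the convex chain rule, $\partial\bigl(\tfrac{1}{2}f_e^2\bigr)(\bmx) = f_e(\bmx)\,\partial f_e(\bmx)$, and summing over $e$ recovers the definition of $L_F(\bmx)$ in \citet{Yoshida:2017zz}. Therefore $L_F(\bmx)\ni\bmb$ is equivalent to $\bmx \in \argmin \Phi$.

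Next I would verify that a subgradient of $\Phi$ is computable in polynomial time. For each $e$, Edmonds's greedy algorithm evaluates $f_e(\bmx)$ in $O(|V|\log|V|)$ calls to the oracle for $F_e$ and simultaneously returns a vertex $\bmy_e \in B(F_e)$ attaining the support-function maximum, hence $\bmy_e \in \partial f_e(\bmx)$. Thus $\sum_e f_e(\bmx)\bmy_e - \bmb \in \partial \Phi(\bmx)$ is efficiently computable. Since $F_e(V)=0$ forces $\bmone^\top \bmy = 0$ for every $\bmy \in B(F_e)$, we have $L_F(\bmx)\subseteq\bmone^\perp$, so $\inprod{\bmone,\bmb}=0$ is a necessary condition for solvability; accordingly $\Phi$ is translation-invariant along $\bmone$, and we restrict $\bmx$ to the subspace $\bmone^\perp$, on which $\Phi$ is no longer flat.

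Finally, to apply the ellipsoid method on $\bmone^\perp$ I need an a priori polynomial bound on $\|\bmx^*\|$ for some minimizer. Pairing the optimality relation $\bmb \in \sum_e f_e(\bmx^*)\,\partial f_e(\bmx^*)$ with $\bmx^*$ and using positive $1$-homogeneity of $f_e$ gives $\sum_e f_e(\bmx^*)^2 = \inprod{\bmb,\bmx^*}$. Combining this with a Poincar\'e-type inequality $\sum_e f_e(\bmx)^2 \geq \lambda(F)\|\bmx\|^2$ on $\bmone^\perp$ yields $\|\bmx^*\| \leq \|\bmb\|/\lambda(F)$. With such a bound, the ellipsoid method returns an $\epsilon$-approximate minimizer in time polynomial in $|V|$, $\log(1/\epsilon)$, and the bit complexity of the input; a rounding step using bit-complexity bounds for vertices of the relevant polyhedra then promotes this to an exact solution. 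The main obstacle I expect is establishing $\lambda(F) \geq 1/\poly$, which amounts to a quantitative spectral gap for the reduced submodular Laplacian and will likely require an argument in the spirit of the submodular Cheeger inequality of \citet{Yoshida:2017zz}; passing from an approximate minimizer to an exact solution is a secondary but nontrivial concern.
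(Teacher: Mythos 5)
You correctly identify the right reformulation: $L_F(\bmx)\ni\bmb$ is the first-order condition for minimizing $\Phi(\bmx)=\tfrac{1}{2}\sum_e f_e(\bmx)^2 - \inprod{\bmb,\bmx}$, and this is exactly the starting point of the paper's Section~\ref{subsec:system-general}. The subgradient computation via Edmonds's greedy algorithm is also fine. However, your route to the a~priori bound needed by the ellipsoid method contains a genuine gap, and it is not a secondary concern.

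You invoke a Poincar\'e-type inequality $\sum_e f_e(\bmx)^2 \geq \lambda(F)\norm{\bmx}^2$ on $\bmone^\perp$ and need $\lambda(F)$ to be at least inverse polynomial (or at least nonzero with polynomial bit complexity). This fails. The quantity $\lambda(F)$ is zero whenever $\ker(F)=\set{S: F_e(S)=0 \;\forall e}$ contains a set other than $\emptyset$ and $V$ --- for instance, a disconnected undirected graph, or indeed any $F$ whose minimizer lattice is nontrivial, the very situation the regression part of the paper is designed to cope with. In that regime $\Phi$ restricted to $\bmone^\perp$ is still flat along directions determined by $\ker(F)$, so your claim that ``$\Phi$ is no longer flat'' on $\bmone^\perp$ is incorrect, and the derived bound $\norm{\bmx^*}\leq\norm{\bmb}/\lambda(F)$ is vacuous. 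Yet the system $L_F(\bmx)\ni\bmb$ can perfectly well have a solution for suitable $\bmb$ in this case, so the theorem still has to be proved there. Even when $\lambda(F)>0$, appealing to the submodular Cheeger inequality of \citet{Yoshida:2017zz} only relates $\lambda(F)$ to a conductance-type quantity, which itself offers no $1/\poly$ lower bound; the hoped-for spectral gap is simply not available.

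The paper avoids this difficulty entirely by passing to a Lagrangian dual. Introducing slack variables $\bmeta$ and dualizing the constraints $f_e(\bmx)\leq\bmeta(e)$ yields the dual problem~\eqref{eq:general-energy-minimization}, which asks for the minimum-norm $\bmphi\geq\bmzero$ in a polyhedron defined by the (exponentially many but SFM-separable) constraints $\sum_e\bmphi(e)F_e(X)\geq\bmb(X)$. Because the dual objective is $\tfrac{1}{2}\norm{\bmphi}_2^2$, any single feasible $\bmphi_0$ immediately bounds $\norm{\bmphi^*}$, and feasibility coincides exactly with solvability of the original system (Lemma~\ref{lem:sup-and-max}); no spectral-gap assumption is needed. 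The primal solution is then recovered from the saddle/KKT conditions as in Theorem~\ref{the:general}(4), where complementary slackness gives $\bmphi^*(e)=f_e(\bmx^*)$ and stationarity in $\bmx$ gives $\bmb\in\sum_e f_e(\bmx^*)\partial f_e(\bmx^*) = L_F(\bmx^*)$. So the dual step you skipped is not a cosmetic reformulation; it is what supplies the bounded region your ellipsoid argument is missing.
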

Because the algorithm used in Theorem~\ref{the:intro-system} is based on the ellipsoid method, the time complexity is large albeit polynomial.
Later, in Section~\ref{subsec:system-constant-arity}, we will  discuss more efficient algorithms when $F$ is given by a directed graph or hypergraph.

In some special cases of the abovementioned problem, a direct interpretation is possible, which is discussed in the following lines.
Suppose $F\colon2^V \to \bbR_+^E$ is constructed from a directed graph $G=(V,E)$.
Then, we regard each arc $uv \in E$ as an ideal diode of $1\Omega$, that is, if the electric potential of $u$ is higher than or equal to that of $v$, then each arc represents a resistance of $1\Omega$, and otherwise, no current flows through it.
Then, the solution $\bmx$ for this directed graph provides the electric potential at the vertices when a current of $\bmb(v)$A flows through each $v \in V$.
For hypergraphs, a hyperedge acts as a circuit element in which current flows from a vertex of the highest potential to that of the lowest.

Solving Laplacian systems for undirected graphs has been intensively studied.
The first nearly-linear-time algorithm was achieved by~\cite{Spielman:2014gq}, and the current fastest algorithm runs in $\widetilde{O}(|E|\log^{1/2}|V|)$ time~\citep{Cohen:2014do}, where $\widetilde{O}(\cdot)$ hides a polylogarithmic factor.
In contrast, to the best of our knowledge, there is no prior study on solving Laplacian systems for directed graphs and hypergraphs.

\paragraph{Submodular Laplacian Regression.}

When the given submodular Laplacian system $L_F(\bmx) \ni\bmb$ does not admit a solution, we may want to find $\bmb' \in \bbR^V$ close to $\bmb$ such that $L_F(\bmx) \ni \bmb'$ has a solution;
this serves as motivation for the following \emph{submodular Laplacian regression problem}:
\begin{align*}
  \min_{\bmp \in \bbR^V} \|\bmp\|_2^2 \quad
  \text{subject to } L_F(\bmx) \ni \bmb+\bmp \text{ for some } \bmx \in \bbR^V.
\end{align*}
It should be noted that once we have solved this problem, we can obtain $\bmx \in \bbR^V$ with $L_F(\bmx) \ni \bmb+\bmp$ in polynomial time by applying Theorem~\ref{the:intro-system}.
The second contribution of this study is the following:
\begin{theorem}\label{the:intro-regression}
%  Let $F:2^V \to \bbR^E$ be a submodular transformation and $\bmb \in \bbR^V$ be a vector.
  We can solve  the submodular Laplacian regression problem in polynomial time.
\end{theorem}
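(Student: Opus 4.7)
My plan is to recast the regression problem as a Euclidean projection onto a closed convex cone, and to solve this projection via the ellipsoid method. The starting observation is that $L_F$ coincides with the subdifferential $\partial E$ of the convex Dirichlet-type energy
\[
  E(\bmx) \;=\; \frac{1}{2}\sum_{e \in E} f_e(\bmx)^2,
\]
where $f_e$ denotes the Lov\'asz extension of $F_e$; this is implicit in the variational construction of $L_F$ in~\citet{Yoshida:2017zz}. Each $f_e$ is convex and non-negative (since $\bmzero$ lies in the base polytope of the non-negative submodular $F_e$), so $E$ is a proper convex function. Consequently $\caC := \overline{\mathrm{ran}(L_F)}$ is closed and convex, and the regression problem is equivalent to
\[
  \min_{\bmc \in \caC}\; \|\bmc - \bmb\|_2^{2},
\]
with optimal residual $\bmp^{*} = \bmc^{*} - \bmb$.

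I would next identify $\caC$ explicitly. Let $K := \{\bmy \in \bbR^V : f_e(\bmy) = 0 \text{ for all } e \in E\}$, a closed convex cone (because each $f_e$ is convex, positively homogeneous, and non-negative). A direct recession-function calculation shows $(E 0^{+}) = I_{K}$, and standard convex duality then yields $\mathrm{dom}(E^{*}) = K^{\circ}$ and hence $\caC = K^{\circ}$. By Moreau's orthogonal decomposition, $\Pi_{K^{\circ}}(\bmb) = \bmb - \Pi_{K}(\bmb)$, so computing $\bmc^{*}$ reduces to the Euclidean projection of $\bmb$ onto $K$.

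To compute $\Pi_{K}(\bmb)$ in polynomial time I would use the ellipsoid method. Since $K = \bigcap_{e \in E}\{\bmy : f_{e}(\bmy) \le 0\}$ is an intersection of $|E|$ convex sublevel sets and each $f_{e}$ admits polynomial-time evaluation and subgradient computation via the classical greedy algorithm for the Lov\'asz extension, a polynomial-time separation oracle for $K$ is immediate. The ellipsoid method then outputs $\Pi_{K}(\bmb)$, hence $\bmc^{*} = \bmb - \Pi_{K}(\bmb)$ and $\bmp^{*}$. Finally, applying Theorem~\ref{the:intro-system} to $\bmc^{*}$ recovers an $\bmx$ with $L_{F}(\bmx) \ni \bmc^{*}$.

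The main obstacle I anticipate is the potential gap between $\mathrm{ran}(L_{F})$ and its closure $K^{\circ}$: if $\bmc^{*}$ lies on the relative boundary of $K^{\circ}$, Theorem~\ref{the:intro-system} need not apply to it directly. I expect to handle this either by perturbing $\bmc^{*}$ slightly into $\mathrm{ri}(K^{\circ}) \subseteq \mathrm{ran}(L_{F})$ and passing to the limit, or by showing that $\mathrm{ran}(L_{F})$ is already closed in this finite-dimensional, piecewise-polyhedral-quadratic setting. Making this quantitative within the accuracy guarantees of the ellipsoid method is the most delicate step.
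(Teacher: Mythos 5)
Your proposal takes a genuinely different route from the paper. You identify $\overline{\mathrm{ran}(L_F)}$ as the polar cone $K^{\circ}$ of $K = \{\bmy : f_e(\bmy) = 0 \; \forall e\}$ via a recession-function argument, then apply Moreau's decomposition to reduce the projection onto $K^{\circ}$ to a projection onto $K$, which you solve by the ellipsoid method with the Lov\'asz-extension greedy oracle. The paper instead derives the feasibility condition ``$\bmb(S)\le 0$ for all $S$ in $\ker(F)$'' from Lagrangian duality of the quadratic program~\eqref{eq:general-primal}, exploits that $\ker(F)$ is a distributive lattice, constructs its Birkhoff representation explicitly (this has polynomial size and is computable via submodular minimization), and then solves a minimum norm point problem over the submodular polyhedron $P_{\caD}(-\bmb)$ with $\caD=\ker(F)$ — either via Frank--Wolfe or via a combinatorial GGT parametric max-flow algorithm. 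The two feasibility descriptions agree: the extreme rays of $K$ are generated by $\{\bmone_S : S\in\ker(F)\}$ and $\pm\bmone$, so $K^{\circ}=\{\bmc : \bmc(S)\le 0\ (S\in\ker(F)),\ \bmc(V)=0\}$; your description in fact handles the $\bmc(V)=0$ constraint more transparently than~\eqref{eq:minnorm-polyhedron}, where it appears only as $\bmc(V)\le 0$ (with $\bmb(V)=0$ implicitly assumed). What your approach buys is conceptual cleanliness and avoidance of lattice machinery; what the paper's approach buys is a \emph{strongly} polynomial running time via the GGT parametric flow, whereas the ellipsoid method is only weakly polynomial, and it produces an exact combinatorial solution rather than an $\varepsilon$-approximate one that must be rounded. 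The one real gap in your write-up is the closure issue you yourself flag — you need $\mathrm{ran}(L_F)=K^{\circ}$ exactly, not merely $K^{\circ}=\overline{\mathrm{ran}(L_F)}$ — but this does hold and is essentially the content of Theorem~\ref{the:general}(4): for any $\bmb\in K^{\circ}$ the primal~\eqref{eq:general-primal} is bounded and its minimizer $\bmx^*$ satisfies $L_F(\bmx^*)\ni\bmb$. Importing that lemma (or reproving it via the same saddle-point/KKT argument) cleanly closes your proof.
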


%when constructing sparsifiers~\cite{Spielman:2011ir} and used in network analysis~\cite{Hayashi:2016ub,Mavroforakis:2015to,Newman:2005em}.

\subsection{Applications}
In this section, we discuss potential applications of our results in machine learning and network analysis.

\paragraph{Semi-supervised Learning}

Semi-supervised learning is a framework for predicting unknown labels of unlabeled data based on both labeled and unlabeled instances. 
In the case of supervised learning, we utilize only labeled instances to predict the labels of unlabeled instances. 
However, in many realistic scenarios, only a few of labeled instances are available compared with a large number of unlabeled instances; 
in such cases, the unlabeled instances also need to be effectively utilized to predict the unknown labels.
This type of learning is referred to as semi-supervised learning.

Formally, the problem of semi-supervised learning can be stated follows: The given dataset $V$ is partitioned into a set $T$ of labeled instances and set $U$ of unlabeled instances. 
Each labeled instance $v \in T$ has its label $\tilde{\bmx}(v) \in \{-1, +1\}$. 
The aim of this problem is to predict the labels $\bmx(v)$ of unlabeled instances $v \in U$.

A standard approach to semi-supervised learning is using Laplacians associated with undirected graphs.
In this method, first, the training dataset is transformed into a similarity graph $G =(V, E, w)$, where $w \colon E \to \bbR$ is a weight function; then, the labels of unlabeled instances are predicted using the Laplacian matrix $L_G \in \bbR^{V \times V}$ for the constructed similarity graph. 
In particular, a similarity graph represents pairwise representations among labeled and unlabeled instances.

Another well known approach proposed by~\cite{Zhu:2003tb} involves solving the Laplacian system $L_G \bmx = \bmb$ under the constraints that $\bmb(v) = 0$ for all $v \in U$ and $\bmx(v) = \tilde{\bmx}(v)$ for all $v \in T$. 
Since its introduction, many variants of this approach have been proposed based on Laplacians for undirected graphs, which have applied to various applications. %\ynote{I think it's better to explicitly write the extension of Theorem~\ref{the:intro-system} with these constraints using the theorem environment and move Appendix A to the main text.}
%\ynote{can we (directly) solve the latter using our results?}\knote{I found the difference between Laplacian regularization and Laplacian system is not so small. But I think it is also solved with the ellipsoid method if the subdifferential of the loss function is computable.}

Using submodular Laplacian systems, we can extend previous semi-supervised learning algorithms to more general similarity expressions, such as directed graphs and hypergraphs.
An arc represents that the tail vertex is closer to $+1$ than the head vertex.
A hyperedge represents that all incident vertices have similar labels.
Furthermore, we can express more complicated relationships with general submodular functions.

To develop a general framework for semi-supervised learning, we extend Theorem~\ref{the:intro-system} by imposing constraints $\bmx(v) = \tilde{\bmx}(v)\;(v \in T)$ and $\bmb(v) = \tilde{\bmb}(v)\;(v \in U)$, where $\tilde{\bmx} \in \bbR^T$and  $\tilde{\bmb} \in \bbR^U$ are provided as a part of the input.
It should be noted that the semi-supervised setting is a special case of this problem for which $\tilde{\bmb}$ is the zero vector.
Then, we show the following:
%This setting is more general than semi-supervised setting in the
%, that is, the constraints on $\bmb$ is that $\bmb(v) = \tilde{\bmb}(v)$ for any $v \in U$ where $\tilde{\bmb} \in \bbR^U$ is a given vector.
%Our contribution is showing this problem can be solved in polynomial time.
\begin{theorem}\label{the:intro-semi-supervised}
	Let $F \colon 2^V \to \bbR^E_+$ be a submodular transformation and suppose $V$ is partitioned into disjoint subsets $T \subseteq V$ and $U \subseteq V$.
  Let $\tilde{\bmx} \in \bbR^T$ and $\tilde{\bmb} \in \bbR^U$ be vectors.
  Then, we can compute $\bmx \in \bbR^V$ that satisfies $L_F(\bmx) \ni \bmb$, $\bmx(v) = \tilde{\bmx}(v)$ for all $v \in T$ and $\bmb(v) = \tilde{\bmb}(v)$ for all $v \in U$ in polynomial time if it exists.
  % Moreover, if each function $F_e \colon 2^V \to \bbR_+$ is of constant arity, then the time complexity can be improved to $O(|V|^3)$.
\end{theorem}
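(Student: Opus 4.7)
The plan is to reduce the constrained problem to the unconstrained setting of Theorem~\ref{the:intro-system} via a convex-optimization reformulation. Recall that the inclusion $L_F(\bmx)\ni\bmb$ is equivalent to the first-order optimality condition $\bmb\in\partial\Phi(\bmx)$ for a convex energy $\Phi\colon\bbR^V\to\bbR$ whose subdifferential coincides with $L_F$; this equivalence is precisely what drives the algorithm underlying Theorem~\ref{the:intro-system}. Under this viewpoint, the semi-supervised problem asks for $\bmx\in\bbR^V$ such that $\bmx(v)=\tilde{\bmx}(v)$ for every $v\in T$ and such that the $U$-block of $\partial\Phi(\bmx)$ contains $\tilde{\bmb}$, while the $T$-block is left unconstrained.

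Because no optimality condition is imposed on the $T$-coordinates, I would fix those coordinates at $\tilde{\bmx}$ and consider the reduced convex objective
\[
  \tilde{\Phi}(\bmy) \;=\; \Phi\bigl((\tilde{\bmx},\bmy)\bigr) \;-\; \inprod{\tilde{\bmb},\bmy}, \qquad \bmy\in\bbR^U,
\]
where $(\tilde{\bmx},\bmy)\in\bbR^V$ denotes the vector that equals $\tilde{\bmx}$ on $T$ and $\bmy$ on $U$. A minimizer $\bmy^\star$ of $\tilde{\Phi}$ satisfies $\tilde{\bmb}\in\bigl(\partial\Phi((\tilde{\bmx},\bmy^\star))\bigr)|_U$, which is exactly the $U$-block of the required inclusion; conversely, the existence of any feasible $(\bmx,\bmb)$ forces $\tilde{\Phi}$ to be bounded below and to attain its infimum. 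Given such a $\bmy^\star$, I would set $\bmx=(\tilde{\bmx},\bmy^\star)$ and recover the free components $\bmb(v)$ for $v\in T$ as any element of the $T$-block of $\partial\Phi(\bmx)$, producing a full solution of the semi-supervised system.

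I would then solve the reduced minimization by the same ellipsoid-method machinery used to prove Theorem~\ref{the:intro-system}. A subgradient of $\tilde{\Phi}$ at $\bmy$ is obtained by evaluating a subgradient of $\Phi$ at $(\tilde{\bmx},\bmy)$ via the submodular-function oracles of the $F_e$'s and the standard Lovász-extension subgradient construction, then projecting onto the $U$-coordinates and subtracting $\tilde{\bmb}$. The polynomial bit-complexity bounds on a minimizer transfer from the unconstrained case, because fixing $\bmx|_T=\tilde{\bmx}$ only restricts the problem to an affine subspace and does not alter the polyhedral structure of the Lovász subdifferentials.

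The main obstacle I expect is the bit-complexity and feasibility bookkeeping: when a solution exists, one must ensure that $\tilde{\Phi}$ has a minimizer of polynomially bounded encoding length so that the ellipsoid method localizes it in polynomial time; when no solution exists, one must certify infeasibility rather than loop forever. Both issues reduce to the corresponding estimates already carried out in the proof of Theorem~\ref{the:intro-system}, with the only new ingredient being a routine adaptation of those estimates to the affine slice $\{\bmx\in\bbR^V:\bmx|_T=\tilde{\bmx}\}$.
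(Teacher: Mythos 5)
Your proposal is correct and follows essentially the same route as the paper: fix $\bmx|_T=\tilde{\bmx}$, minimize $\frac{1}{2}\sum_{e\in E} f_e(\bmx)^2 - \sum_{v\in U}\tilde{\bmb}(v)\bmx(v)$ over the free $U$-coordinates via the ellipsoid method with Lov\'asz-extension subgradient oracles, and read $\bmb$ off the first-order optimality condition, exactly as in the paper's problem~\eqref{eq:semi-supervised-primal}. One small slip in wording: ``any element of the $T$-block of $\partial\Phi(\bmx)$'' should be the $T$-block of a \emph{specific} $\bmg\in\partial\Phi(\bmx)$ chosen so that $\bmg|_U=\tilde{\bmb}$, since the $T$- and $U$-components of an element of $\partial\Phi(\bmx)$ cannot in general be chosen independently.
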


\paragraph{Network Analysis}

A typical task in network analysis is measuring the importance, or \emph{centrality}, of vertices and edges.
There are many centrality notions depending on applications, and for undirected graphs, some notions are defined using Laplacians matrices.

Before describing these centrality notions, we require some definitions.
Let $G=(V,E)$ be an undirected graph, and for each $v \in V$, let $\bme_v \in \bbR^V$ be the vector with $\bme_v(v) = 1$ and $\bme_v(w) = 0$ for $w \in V \setminus \set{v}$.
Then, for vertices $u,v \in V$, the quantity $R_G(u,v):=(\bme_u-\bme_v)^\top L_G^+(\bme_u-\bme_v) \in \bbR_+$ is called the \emph{effective resistance} from $u$ to $v$, where $L_G^+ \in \bbR^{V \times V}$ is the pseudo-inverse of the Laplacian $L_G\in \bbR^{V \times V}$.
The effective resistance can be considered as the resistance of the circuit associated with $G$ when a current is passed from $u$ to $v$.
%It is also known that $2|E| R_{uv}$ is the commuting time of a random walk between $u$ and $v$.

In~\citep{Hayashi:2016ub,Mavroforakis:2015to}, the effective resistance $R_G(u,v)$ of an edge $uv \in E$ is directly used as the centrality of an edge, and is referred to as the \emph{spanning tree centrality}, because it is known to be equal to the probability that the edge is used when sampling a spanning tree uniformly at random.

\cite{Brandes:2005bk} introduced the \emph{current flow closeness centrality} of a vertex, which is defined as
\[
  \tau_{\mathrm{C}}(v) = \frac{n}{\sum_{u \in V: u\neq v}R_G(u,v)}.
\]
This notion implicitly assumes that the effective resistance satisfies the \emph{triangle inequality}, that is, $R_G(u,v)+R_G(v,w) \geq R_G(u,w)$ for every $u,v,w\in V$ and hence can be seen as a (quasi-)metric.
Then, we regard a vertex as important when it is close to every other vertex; in other words, it is in the center of the network.

\cite{Newman:2005em} introduced the notion of the \emph{current flow betweenness centrality} of a vertex, which is defined as
\[
  \tau_{\mathrm{B}}(v) = \frac{1}{(|V|-1)(|V|-2)}\sum_{s,t \in V: v \neq s \neq t \neq v}\tau_{st}(v),
\]
where $\tau_{st}(v)=\sum_{w:vw\in E}|\bmx(v)-\bmx(w)|$ for $\bmx = L_G^+(\bme_s-\bme_t)$.
In words, $\tau_{st}(v)$ is the total current passing through the edges incident to $v$ when a current of $1A$ flows from $s$ to $t$.
Intuitively, we regard a vertex as important when a large fraction of the current passes through it.

It is natural to inquire whether these notions can be extended to directed graphs, hypergraphs, or general submodular transformations.
Using Theorem~\ref{the:intro-system}, we can define the effective resistance $R_F(u,v)$ for $u,v \in V$ in terms of the Laplacian $L_F\colon\bbR^V \to 2^{\bbR^V}$ associated with a submodular transformation $F\colon2^V \to \bbR_+^E$.
That is, we define
\[
  R_F(u,v):=(\bme_u-\bme_v)^\top L_F^+(\bme_u-\bme_v),
\]
where $L_F^+(\bme_u-\bme_v) \in \bbR^V$ is the vector $\bmx \in \bbR^V$ obtained by applying Theorem~\ref{the:intro-system} with $\bmb=\bme_u-\bme_v$.
When there is no solution, we define $R_F(u,v) = \infty$.
The formal definition is deferred to Section~\ref{sec:effective-resistance}.

For directed graphs and hypergraphs, as in the case of undirected graphs, the effective resistance $R_F(u,v)$ can be seen as the resistance of the circuit associated with the graph when a current is passed from $u$ to $v$.
%As opposed to the undirected graph case, there could be several flow configurations that satisfies Kirchhoff and Ohm's laws, but we choose the one with the minimum energy.
Note that $R_F(u,v)$ may not be equal to $R_F(v,u)$ for directed graphs.

Further, we can generalize centrality notions mentioned above to general submodular transformations.
As mentioned earlier, the implicit assumption when defining current flow closeness centrality is of the triangle inequality of effective resistance.
Here, we show that it also holds for general submodular transformations:
\begin{theorem}\label{the:intro-triangle-inequality}
  Let $F\colon2^V \to \bbR_+^E$ be a submodular transformation.
  Then, effective resistance $R_F(u,v)$ satisfies triangle inequality.
\end{theorem}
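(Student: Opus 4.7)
The plan is to derive the triangle inequality by combining the variational characterization of $R_F$ with a maximum principle for the Dirichlet energy of $F$, and reducing to the classical four-point inequality via a local linearization.

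First, I would use the fact that $L_F$ is the subdifferential of the convex functional $\mathcal{E}_F(\bmx)=\tfrac{1}{2}\sum_{e\in E}f_e(\bmx)^2$, where $f_e$ is the Lov\'asz extension of $F_e$. Because $F_e\ge 0$ is normalized and submodular, $f_e$ is non-negative, convex, and positively $1$-homogeneous, so $\mathcal{E}_F$ is convex and positively $2$-homogeneous. Consequently any $\bmx_{uv}$ with $L_F(\bmx_{uv})\ni\bme_u-\bme_v$ minimizes $\mathcal{E}_F(\bmy)-(\bme_u-\bme_v)^\top\bmy$ over $\bmy\in\bbR^V$, and by $2$-homogeneity, $R_F(u,v)=(\bme_u-\bme_v)^\top\bmx_{uv}=2\mathcal{E}_F(\bmx_{uv})$.

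Second, I would establish a maximum principle: a minimizer $\bmx_{uv}$ may be chosen so that $\bmx_{uv}(v)\le\bmx_{uv}(z)\le\bmx_{uv}(u)$ for every $z\in V$. The subgradient equation $\bme_u-\bme_v=\sum_e f_e(\bmx_{uv})\bmy_e$, with $\bmy_e\in\partial f_e(\bmx_{uv})\subseteq B(F_e)$, yields at each $z\notin\{u,v\}$ a harmonic-type identity $\sum_e f_e(\bmx_{uv})\bmy_e(z)=0$; a truncation argument exploiting the sublinearity of $f_e$ then rules out any violation of $[\bmx_{uv}(v),\bmx_{uv}(u)]$. Applied to $\bmx_{uw}$, this gives $\bmx_{uw}(u)\ge\bmx_{uw}(v)\ge\bmx_{uw}(w)$, hence
\[
R_F(u,w)=\bigl(\bmx_{uw}(u)-\bmx_{uw}(v)\bigr)+\bigl(\bmx_{uw}(v)-\bmx_{uw}(w)\bigr),
\]
reducing the problem to bounding each summand by $R_F(u,v)$ and $R_F(v,w)$ respectively.

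For the final bound $\bmx_{uw}(u)-\bmx_{uw}(v)\le R_F(u,v)$, I would pass to the local linearization of $L_F$ at $\bmx_{uw}$: choose bases $\bmy_e\in\partial f_e(\bmx_{uw})$ realizing $\bme_u-\bme_w=\sum_e f_e(\bmx_{uw})\bmy_e$ and form the positive semidefinite graph Laplacian $M=\sum_e\bmy_e\bmy_e^\top$, which has $M\bmone=\bmzero$ since $\bmy_e(V)=0$ and satisfies $M\bmx_{uw}=\bme_u-\bme_w$. The classical triangle inequality for graph Laplacian pseudoinverses (equivalent to the four-point inequality for $M^+$) then yields a telescoping bound on $\bmx_{uw}(u)-\bmx_{uw}(v)$ in terms of the $M$-effective resistance between $u$ and $v$, and a convex-duality comparison exploiting the sublinearity of $f_e$ together with the maximum principle connects this back to $R_F(u,v)$. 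The main obstacle is precisely this last comparison: a naive Fenchel-duality bound on $\mathcal{E}_F\ge\tfrac{1}{2}\bmy^\top M\bmy$ points in the wrong direction, so one must combine the subgradient structure of $f_e$ with the maximum principle to construct a strict-improvement competitor whenever the comparison is violated, which is where the specific structure of submodular transformations (as opposed to a generic convex energy) enters in a crucial way.
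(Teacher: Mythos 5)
Your proposal works on the primal (potential) side and ultimately rests on two claims that are neither established nor, as far as I can see, true in the required generality: (i) that the local linearization $M=\sum_{e}\bmy_e\bmy_e^\top$ behaves like an ordinary graph Laplacian, and (ii) that $R_M(u,v)\le R_F(u,v)$. For (i), the subgradients $\bmy_e\in\partial f_e(\bmx_{uw})$ lie in a general base polytope $B(F_e)$ and need not have the form $\bme_i-\bme_j$; thus $M$ is merely a PSD matrix with $M\bmone=\bmzero$, and the ``classical triangle inequality for graph Laplacian pseudoinverses'' you invoke does not apply to such matrices. For (ii), as you yourself note, the natural Fenchel bound $\tfrac{1}{2}\sum_e f_e(\bmy)^2\ge\tfrac{1}{2}\bmy^\top M\bmy$ yields $R_F(u,v)\le R_M(u,v)$, the \emph{reverse} of what you need, and the proposed ``strict-improvement competitor'' is never specified. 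There is also a structural obstruction behind this: the classical primal proof of $\bmx_{uw}(u)-\bmx_{uw}(v)\le R(u,v)$ uses the symmetry of $L_G^{+}$ to rewrite this quantity as a pairing with the $vw$-potential and then applies the maximum principle \emph{there}; the nonlinear, non-self-adjoint operator $L_F$ has no such reciprocity, so your maximum-principle decomposition $R_F(u,w)=(\bmx_{uw}(u)-\bmx_{uw}(v))+(\bmx_{uw}(v)-\bmx_{uw}(w))$ correctly reduces the theorem to the key inequality but gives no leverage to prove it.

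The paper sidesteps all of this by working on the dual (flow) side. Lemma~\ref{lem:effective-resistance-rephrase} rewrites $R_F(u,v)=\norm{\bmphi^*_{uv}}^2$, where $\bmphi^*_{uv}$ is the optimal solution of the dual problem~\eqref{eq:general-energy-minimization} with $\bmb=\bme_u-\bme_v$. The entire combinatorial content of the triangle inequality is then that the componentwise maximum $\bmphi_{uv}\vee\bmphi_{vw}$ is feasible for the $uw$-instance, verified by a short case analysis over $S\subseteq V$ according to which of $u,v,w$ lie in $S$; the elementary bound $\max(a,b)^2\le a^2+b^2$ for $a,b\ge 0$ then gives $R_F(u,w)\le\norm{\bmphi_{uv}\vee\bmphi_{vw}}^2\le\norm{\bmphi_{uv}}^2+\norm{\bmphi_{vw}}^2=R_F(u,v)+R_F(v,w)$. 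If you wish to keep a primal-side argument, you would in effect have to rediscover this dual feasibility fact; as written, your proposal has a genuine gap at the decisive step.
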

% Hence, we can naturally define current flow closeness centrality for general submodular transformations.

\subsection{Organization}
The notions used in this paper are reviewed in Section~\ref{sec:pre}.
Then, we prove Theorems~\ref{the:intro-system} and~\ref{the:intro-regression} in Sections~\ref{sec:system} and~\ref{sec:regression}, respectively.
Due to space limitations, the proofs of Theorems~\ref{the:intro-semi-supervised} and~\ref{the:intro-triangle-inequality} are deferred to Sections~\ref{sec:semi-supervised} and~\ref{sec:effective-resistance}, respectively.

%!TEX root=./main.tex
\section{Preliminaries}\label{sec:pre}

For a positive integer $n \in \bbN$, the set $\set{1,\ldots,n}$ is denoted by $[n]$.
For a vector $\bmx \in \bbR^V$ and a set $S \subseteq V$, $\bmx(S)$ denotes the sum $\sum_{v \in S}\bmx(v)$.
%For an ordering $\pi=(\pi_1,\ldots,\pi_{|V|})$ of $V$, we define $\bbR^V_\pi$ as the set $\set{\bmx \in \bbR^V \mid \bmx(\pi_1) \geq \bmx(\pi_2) \geq \cdots \geq \bmx(\pi_{|V|})}$.

\subsection{Distributive lattice and the Birkoff representation theorem}
A subset $S$ in a poset $(X, \preceq)$ is called a \emph{lower ideal} if $S$ is down-closed with respect to the partial order $\preceq$.
% Given a directed graph $D$, a subset $X \subseteq V$ is called a \emph{lower ideal} if any vertex reachable from a vertex in $X$ is also contained in $X$.
The \emph{Birkoff representation theorem}~\citep{Birkhoff1937,Fujishige2005} states that for any distributive lattice $\caD \subseteq 2^V$ with $\emptyset, V \in \caD$, there uniquely exists a poset $P$ on a partition $\Pi(V)$ of $V$ and a partial order $\preceq$ such that
$S \in \caD$ if and only if $S$ is a lower ideal in $P$.
The poset $P$ can be constructed as follows.
Let us define an equivalence relation $\sim$ on $V$ as $i \sim j$ if any $S \in \caD$ contains either both, $i$ and $j$, or none.
Let $\Pi(V)$ be the set of equivalence classes induced by $\sim$.
Then, a partial order $\preceq$ on $\Pi(V)$ is defined as $[i] \preceq [j]$, where $[i]$ and $[j]$ are equivalence classes containing $i$ and $j$, respectively, if any set $S \in \caD$ containing $j$ also contains $i$.
It can be confirmed that $\preceq$ is a well-defined partial order.
Then, $P = (\Pi(V), \preceq)$ is the desired poset.

Using the Hasse diagram of $P$, one can maintain the poset $P$ in a digraph with $n$ vertices and $m$ arcs.
We refer to such a digraph as the \emph{Birkoff representation of $\caD$}.
Since $n = O(|V|)$ and $m = O(n^2)$, the Birkoff representation is a compact representation of a distributive lattice, even if $\caD$ contains exponentially many subsets.
Furthermore, if $\caD$ is the set of minimizers of a submodular function, one can construct the Birkoff representation in strongly polynomial time using submodular function minimization algorithms (see \citet{Fujishige2005}).

%\ynote{we may want to move this to Section~\ref{sec:regression} as it is only used there.}

% poset on a partition ($\Pi(V), \preceq)$ \ynote{Explain what $E$ is. By the way, we may want to write $V$ here because the domains of submodular functions we use is always $2^V$.} such that any $X \in \caD$ uniquely corresponds to a lower ideal of $(\Pi(E), \preceq)$.
% Indeed, the partial order is defined as follows: .

\subsection{Submodular functions}
Let $\caD \subseteq 2^V$ be a distributive lattice.
A function $F\colon \caD \to \bbR$ is called \emph{submodular} if
\[
  F(S) + F(T) \geq F(S \cup T) + F(S \cap T)
\]
for every $S,T \in \caD$.
For a submodular function $F\colon \caD \to \bbR$, the \emph{submodular polyhedron} $P_\caD(f)$ and \emph{base polyhedron} $B_\caD(F)$ associated with $F$ are defined as
\begin{align*}
    P_\caD(F) & = \{ \bmx \in \bbR^V : \bmx(S) \leq F(S)\;(S \in \caD) \}  \\
    B_\caD(F) & = \{ \bmx \in P_\caD(F) : \bmx(E) = F(V) \},
\end{align*}
respectively.
We omit the subscripts $\caD$ when it is clear from the context.

% \begin{lemma}[{\cite[Theorem~3.2]{Fujishige2005}}]
    % A base polytope $B_\caD(F)$ is bounded if and only if $\caD = 2^E$.
% \end{lemma}

For a submodular function $F\colon \caD \to \bbR$, its \emph{\Lovasz extension} $f\colon \bbR^V \to \bbR$ is defined as
\[
  f(\bmx) = \max_{\bmw \in B_\caD(F)}\langle \bmx,\bmw\rangle,
\]
where $\langle \cdot ,\cdot\rangle$ is the inner product.
We note that $f$ is an extension considering $f(\bmone_S) = F(S)$ for every $S \in \caD$, where $\bmone_S \in \bbR^V$ is a vector with $\bmone_S(i) = 1$ if $i \in S$ and $\bmone_S(i) = 0$ otherwise.
Moreover, $f$ is convex and positively homogeneous, that is, $f(\alpha \bmx) = \alpha f(\bmx)$ for every $\alpha \geq 0$.
It is known that the subdifferential of $f$ at $\bmx$ is $\argmax_{\bmw \in B_\caD(F)}\langle \bmx,\bmw\rangle$, and this is denoted by $\partial f(\bmx)$.

\subsection{Submodular Laplacians}

A transformation $F\colon 2^V \to \bbR_+^E$ is called \emph{submodular} if each function $F_e\colon S \mapsto F(S)(e)$ is submodular.
We now formally introduce the Laplacian operator associated with a submodular transformation:
\begin{definition}[Submodular Laplacian operator~\citep{Yoshida:2017zz}]\label{def:submodular-Laplacian}
  Let $F\colon 2^V \to \bbR_+^E$ be a submodular transformation.
%  For each $e \in E$, let $F_e:2^{[n]} \to \bbR$ be the submodular function corresponding to the $j$-th component of $f$.
  Then, the Laplacian $L_F\colon \bbR^V \to 2^{\bbR^V}$ of $F$ is defined as
  \[
    L_F(\bmx) = \Bigset{\sum_{e \in E} \bmw_e \langle \bmw_e, \bmx\rangle : \bmw_e \in \partial f_e(\bmx)\; (e \in E)} = \Bigset{W W^\top \bmx : W \in \prod_{e \in E}\partial f_e(\bmx)},
  \]
  where $f_e\colon \bbR^V \to \bbR$ is the \Lovasz extension of $F_e$ for each $e \in E$.
\end{definition}
Here, we have $\langle \bmx,\bmy\rangle = \sum_{e \in E}f_e(\bmx)^2$ for any $\bmy \in L_F(\bmx)$, and hence we can write $\bmx^\top L_F(\bmx)$ to denote this quantity.

As a descriptive example, consider an undirected graph $G=(V,E)$ and the associated submodular transformation $F:2^V \to \bbR_+^E$.
Then, for every edge $e = uv \in E$ we have
\[
\partial f_e(\bmx) = \begin{cases}
  \set{\bme_u- \bme_v} & \text{if }\bmx(u) > \bmx(v), \\
  \set{\bme_v- \bme_u} & \text{if }\bmx(u) < \bmx(v), \\
  \bigset{\alpha \bme_u- \alpha\bme_v : \alpha \in [-1,1]} & \text{if } \bmx(u)=\bmx(v).
\end{cases}
\]
We can confirm that $L_F(\bmx)(v) = \sum_{w:vw\in E}\bigl(\bmx(v)-\bmx(w)\bigr)= L_G(\bmx)(v)$ for every $v \in V$, where $L_G$ is the standard Laplacian matrix associated with $G$.
Refer to~\citep{Yoshida:2017zz} for further examples.

\subsection{Convex optimization}
We use the following basic results from convex optimization.

\begin{theorem}[{\citet[Corollary~28.3.1]{Rockafellar1996}}]\label{thm:saddle}
    Let $f, g_1, \dots, g_m : \bbR^n \to \bbR$ be proper convex functions.
    Consider the optimization problem
    \begin{align*}
        \min_{\bmx \in \bbR^n} f(\bmx) \quad \text{subject to} \quad g_i(\bmx) \leq 0. \quad (i = 1, \dots, m)
    \end{align*}
    Assume that the optimal value is finite and the Slater condition is satisfied.
    Then, a feasible solution $\bmx$ is optimal if and only if there exists a Lagrange multiplier $\bmphi \in \bbR_+^m$ such that $(\bmx, \bmphi)$ is a saddle point of the Lagrangian
    \begin{align*}
        P(\bmx, \bmphi) := f(\bmx) + \sum_{i \in [m]} \bmphi(i)g_i(\bmx).
    \end{align*}
    Equivalently, $\bmx$ is optimal if and only if there exists $\bmphi \in \bbR^m_+$ satisfying the Karush-Kuhn-Tucker (KKT) condition together with $\bmx$.
\end{theorem}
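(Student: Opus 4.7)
The plan is to prove the saddle-point characterization in two directions and then derive the KKT equivalence as a short unpacking of the saddle-point inequalities via subdifferential calculus. The sufficiency direction (saddle point $\Rightarrow$ optimality) is routine; the necessity direction is where the Slater condition and the finiteness of the optimal value are essential, and will be the main obstacle.

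For sufficiency, starting from a saddle point $(\bmx^*,\bmphi^*)$, I would extract primal feasibility and complementary slackness by reading the left inequality $P(\bmx^*,\bmphi)\le P(\bmx^*,\bmphi^*)$ as a linear inequality in $\bmphi\ge\bmzero$: unbounded growth in any coordinate with $g_i(\bmx^*)>0$ would contradict the inequality, while $\bmphi=\bmzero$ pins down $\sum_i\bmphi^*(i)g_i(\bmx^*)=0$. Optimality is then immediate from the right inequality evaluated at any feasible point, since the sum $\sum_i\bmphi^*(i)g_i(\bmx)$ is nonpositive there.

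For necessity, I would introduce the perturbation function $\phi(\bmv)=\inf\{f(\bmx):g_i(\bmx)\le\bmv(i)\text{ for all }i\}$, which is convex and coordinatewise nonincreasing. Slater's condition supplies $\bar\bmx$ with $g_i(\bar\bmx)<0$ for every $i$, placing a neighborhood of $\bmzero$ in $\mathrm{dom}\,\phi$, and finiteness of the optimal value $\phi(\bmzero)$ makes $\phi$ proper, so standard convex analysis yields $\partial\phi(\bmzero)\ne\emptyset$. Any subgradient $-\bmphi^*\in\partial\phi(\bmzero)$ satisfies $\bmphi^*\ge\bmzero$ by monotonicity, and the subgradient inequality $\phi(\bmv)\ge\phi(\bmzero)-\langle\bmphi^*,\bmv\rangle$ applied at $\bmv=(g_1(\bmx),\ldots,g_m(\bmx))$ directly yields the right saddle-point inequality $f(\bmx^*)\le P(\bmx,\bmphi^*)$ for every $\bmx$; setting $\bmx=\bmx^*$ then forces complementary slackness, from which the left saddle-point inequality is immediate. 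The KKT equivalence drops out by rewriting the right saddle-point inequality, which says $\bmx^*$ minimizes $P(\cdot,\bmphi^*)$ globally, via the Moreau--Rockafellar sum rule for subdifferentials (applicable since $f$ and each $g_i$ are finite-valued) as $\bmzero\in\partial f(\bmx^*)+\sum_i\bmphi^*(i)\partial g_i(\bmx^*)$; the left inequality encodes primal feasibility together with complementary slackness, and $\bmphi^*\ge\bmzero$ is dual feasibility.

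The principal obstacle is the careful verification that $\partial\phi(\bmzero)\ne\emptyset$: one must argue that Slater's condition places $\bmzero$ in the interior of $\mathrm{dom}\,\phi$ and that finiteness of $\phi(\bmzero)$ excludes $\phi\equiv-\infty$, so that the standard subdifferentiability-on-the-interior theorem applies. Without either ingredient, no subgradient exists and the Lagrange multiplier cannot be produced. Everything else---extracting $\bmphi^*\ge\bmzero$ from monotonicity, converting the subgradient inequality into the Lagrangian inequality, and rewriting stationarity via the sum rule---is essentially bookkeeping.
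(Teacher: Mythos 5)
The paper does not prove this statement: it is quoted verbatim as a background result and attributed to \citet[Corollary~28.3.1]{Rockafellar1996}, so there is no in-paper argument to compare against. Your proof is correct and follows the standard route, which is essentially the one Rockafellar uses: sufficiency by reading feasibility and complementary slackness off the left saddle inequality and optimality off the right; necessity via the perturbation (value) function $\phi(\bmv)=\inf\{f(\bmx):g_i(\bmx)\le\bmv(i)\}$, with Slater's condition placing $\bmzero$ in $\mathrm{int}(\mathrm{dom}\,\phi)$ and finiteness of $\phi(\bmzero)$ giving properness, so that $\partial\phi(\bmzero)\neq\emptyset$ and any $-\bmphi^*\in\partial\phi(\bmzero)$ furnishes the multiplier; and the KKT restatement via the Moreau--Rockafellar sum rule, which is indeed unconditional here since $f,g_i$ are finite-valued on all of $\bbR^n$. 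One small point worth tightening if this were to be written out in full: properness of $\phi$ requires $\phi>-\infty$ everywhere, not merely $\phi\not\equiv-\infty$; the correct justification is that a convex function taking the value $-\infty$ at any point of its domain is $-\infty$ throughout the relative interior of its domain, so $\phi(\bmzero)\in\bbR$ together with $\bmzero\in\mathrm{int}(\mathrm{dom}\,\phi)$ rules this out. Everything else is as you describe.
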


%!TEX root=./main.tex

\section{Solving Submodular Laplacian Systems}\label{sec:system}

In this section, we prove Theorem~\ref{the:intro-system}.
Throughout this section, we fix a submodular transformation $F\colon \bbR^V \to \bbR_+^E$ and a vector $\bmb \in \bbR^V$.
%Recall that the Laplacian $L_F$ is a multi-valued function.
%Hence, the goal is to find $\bmx \in \bbR^V$ such that $L_F(\bmx) \ni \bmb$.
We describe a general method based on the ellipsoid method in Section~\ref{subsec:system-general}.
Then, in Section~\ref{subsec:system-constant-arity}, we discuss a more efficient algorithm when $F\colon 2^V \to \bbR_+^E$ is given by a directed graph or hypergraph.

\subsection{General case}\label{subsec:system-general}
Let $f_e\colon \bbR^V \to \bbR\;(e\in E)$ be the \Lovasz extension of $F_e$.
We consider the following optimization problem:
\begin{align}\label{eq:general-primal}
  \min_{\bmx \in \bbR^V,\bmeta \in \bbR^E} \frac{1}{2}\| \bmeta \|_2^2  - \langle \bmb, \bmx \rangle \quad \text{subject to } f_e(\bmx) \leq \bmeta(e) \quad (e \in E).
\end{align}
This problem is equivalent to minimizing $\frac{1}{2}\sum_{e \in E}f_e(\bmx)^2  - \langle \bmb, \bmx \rangle = \frac{1}{2}\bmx^\top L_F(\bmx) - \langle \bmb,\bmx\rangle$, which is continuously differentiable and convex.
Then, we can directly show that the minimizer $\bmx^*$ of the latter problem satisfies $L_F(\bmx) \ni \bmb$ from the first order condition.
By considering~\eqref{eq:general-primal}, however, we can exploit the combinatorial structure of the problem to obtain a more efficient algorithm.

Introducing the Lagrange multiplier $\bmphi \geq \bmzero$, we can obtain the corresponding Lagrangian as
\begin{align}
  P(\bmx,\bmeta,\bmphi)
  &= \frac{1}{2}\|\bmeta\|_2^2 -\langle \bmb,\bmx\rangle + \sum_{e \in E} \bmphi(e) (f_e(\bmx) - \bmeta(e)) \label{eq:general-lagrangian} \\
  &= \frac{1}{2}\|\bmeta\|_2^2  -\langle \bmphi, \bmeta \rangle - \left[ \langle \bmb, \bmx \rangle - \sum_{e \in E} \bmphi(e) f_e(\bmx) \right]. \nonumber
\end{align}

Thus,
\begin{align*}
  \min_{\bmx, \bmeta}P(\bmx,\bmeta,\bmphi)
  &= -\frac{1}{2}\|\bmphi\|_2^2 - \max_{\bmx}\left[ \langle \bmb, \bmx \rangle - \sum_{e \in E} \bmphi(e) f_e(\bmx) \right]
  = -\frac{1}{2}\|\bmphi\|_2^2 - \left(\sum_{e \in E} \bmphi(e) f_e \right)^*(\bmb),
\end{align*}
where $(\cdot)^*$ is the Fenchel conjugate.
Since \Lovasz extensions are positively homogeneous, we can confirm that $\left(\sum_{e \in E} \bmphi(e) f_e \right)^*(\bmb)$ is equal to either infinity or zero.
Thus, we obtain the dual problem:
\begin{align}\label{eq:general-dual}
  \min_{\bmphi \geq \bmzero} \frac{1}{2}\| \bmphi \|_2^2 \quad \text{subject to } \left(\sum_{e \in E} \bmphi(e) f_e \right)^*(\bmb) \leq 0.
\end{align}
%\ynote{where do we get $\bmphi \geq \bmzero$?}\tnote{Because $\phi$ corresponds to inequality constraints.}

The constraint can be checked with submodular function minimization.
To see this, we observe the following:
\begin{lemma}\label{lem:sup-and-max}
  We have
  \begin{align*}
    \left(\sum_{e \in E} \bmphi(e) f_e \right)^*(\bmb) = \sup_{\bmx\in\bbR^V} \left[ \inprod{\bmb, \bmx} - \sum_{e\in E}\bmphi(e) f_e(\bmx) \right] \leq 0
  \end{align*}
  if and only if
  \begin{align*}
    \max_{\bmx\in [0,1]^V} \left[ \inprod{\bmb, \bmx} - \sum_{e\in E}\bmphi(e) f_e(\bmx) \right]
      = \max_{X \subseteq V} \left[ \bmb(X) - \sum_{e\in E}\bmphi(e) F_e(X) \right] \leq 0.
  \end{align*}
\end{lemma}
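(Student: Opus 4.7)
The plan is to prove the two assertions separately: first the inner equality identifying the maximum over the cube with a combinatorial maximum over subsets, then the outer equivalence that passes from a constrained maximum to an unconstrained supremum.

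For the inner equality, I will exploit the threshold representation of the \Lovasz extension. Any $\bmx \in [0,1]^V$ can be decomposed as $\bmx = \int_0^1 \bmone_{S_t}\,\dt$ with $S_t = \{v : \bmx(v) \geq t\}$. Both the linear form $\inprod{\bmb,\cdot}$ and each \Lovasz extension $f_e$ average cleanly against this decomposition, giving $\inprod{\bmb,\bmx} = \int_0^1 \bmb(S_t)\,\dt$ and $f_e(\bmx) = \int_0^1 F_e(S_t)\,\dt$. Hence
\[
  \inprod{\bmb,\bmx} - \sum_{e\in E}\bmphi(e) f_e(\bmx) = \int_0^1 \Bigl[\bmb(S_t) - \sum_{e\in E}\bmphi(e)F_e(S_t)\Bigr]\dt \leq \max_{X\subseteq V}\Bigl[\bmb(X) - \sum_{e\in E}\bmphi(e) F_e(X)\Bigr],
\]
with equality attained at $\bmx = \bmone_{X^*}$ for an optimal subset $X^*$. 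This handles both inequalities at once.

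For the outer equivalence, write $g(\bmx) := \inprod{\bmb,\bmx} - \sum_{e\in E}\bmphi(e) f_e(\bmx)$. The direction $(\Rightarrow)$ is immediate from $[0,1]^V \subseteq \bbR^V$. For $(\Leftarrow)$, I will use two structural properties of $g$. First, $g$ is positively homogeneous because each $f_e$ is, so any $\bmx \geq \bmzero$ can be rescaled into $[0,1]^V$ without changing the sign of $g(\bmx)$. Second, the normalization $F_e(V)=0$ yields the shift identity $f_e(\bmx + c\bmone_V) = f_e(\bmx)$, hence $g(\bmx + c\bmone_V) = g(\bmx) + c\,\bmb(V)$. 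Taking $c = -\min_v \bmx(v)$ reduces an arbitrary $\bmx$ to a nonnegative one, and then rescaling into the cube reduces it to a point where the bound from the inner equality applies.

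The subtle step — which I expect to be the main obstacle — is closing the shift reduction: the residual $c\,\bmb(V)$ needs to vanish. This is exactly the point where the necessary condition $\bmb(V)=0$ (implicit in the feasibility of the dual constraint $\left(\sum_{e\in E}\bmphi(e) f_e\right)^*(\bmb)\leq 0$, and equivalently required for $L_F(\bmx)\ni \bmb$ to be solvable at all) is invoked. Once this is noted, the argument glues the two reductions together cleanly and the equivalence follows.
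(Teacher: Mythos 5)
Your proposal is correct and follows essentially the same route as the paper: the $(\Rightarrow)$ direction from domain inclusion, and the $(\Leftarrow)$ direction by shifting along $\bmone$ (using $F_e(V)=0$) to make the optimizer nonnegative and then rescaling into the cube via positive homogeneity. Your explicit Choquet/threshold argument for the inner equality, and your flagging of the tacit normalization $\bmb(V)=0$ needed so that the shift does not change the objective, are both a bit more careful than the paper, which leaves the former implicit and passes over the latter silently.
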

\begin{proof}
  $(\Rightarrow)$ Trivial.

  $(\Leftarrow)$ Suppose there is $\bmx \in \bbR^V$ such that $\inprod{\bmb, \bmx} - \sum_{e\in E}\bmphi(e) f_e(\bmx) > 0$.
  By the assumption that $F(V) = \bmzero$, we have $f(\bmx) = f(\bmx + \alpha \bmone)$ for any $\alpha \in \bbR$, where $\bmone \in \bbR^V$ is the all-one vector.
  This holds because $f_e(\bmx+\alpha \bmone) = \max_{\bmw \in B(F_e)}\langle \bmw,\bmx+\alpha \bmone\rangle =  \max_{\bmw \in B(F_e)}\langle \bmw,\bmx\rangle = f_e(\bmx)$, where we used the fact that $\langle \bmw,\bmone\rangle = 0$ holds for any $\bmw \in B(F_e)$ as $F_e(V)=0$.

  Now, by adding $\alpha \bmone$ to $\bmx$ for a large $\alpha \in \bbR$, we can assume $\bmx(v) \geq 0$ for every $v \in V$.
  Moreover, $\bmx \neq \bmzero$ because $f(\bmzero) = \bmzero$.
  Since the \Lovasz extension $f$ is positively homogeneous, for $\bmx' = \bmx/\|\bmx\|_\infty \in [0,1]^V$, we have $f(\bmx') = f(\bmx/\|\bmx\|_\infty) = f(\bmx)/\|\bmx\|_\infty > 0$.
\end{proof}

By Lemma~\ref{lem:sup-and-max}, we obtain the following dual problem:
\begin{align}\label{eq:general-energy-minimization}
  \min_{\bmphi \geq \bmzero} \frac{1}{2}\| \bmphi \|_2^2  \quad \text{subject to } \sum_{e \in E} \bmphi(e)F_e(X) - \bmb(X) \geq 0 \; (X \subseteq V).
\end{align}
A separation oracle for the constraint can be implemented by submodular function minimization.
Therefore, we can use the ellipsoid method to solve~\eqref{eq:general-energy-minimization}.

\begin{theorem}\label{the:general}
  The following hold:
  \begin{itemize}
  \itemsep=0pt
  \item[(1)] \eqref{eq:general-primal} is feasible if and only if~\eqref{eq:general-energy-minimization} is bounded.
  \item[(2)] A strong duality holds between~\eqref{eq:general-primal} and~\eqref{eq:general-energy-minimization}, that is, the optimal values of~\eqref{eq:general-primal} and~\eqref{eq:general-energy-minimization} coincide.
  \item[(3)] We can solve~\eqref{eq:general-primal} in polynomial time.
  \item[(4)] The optimal solution $(\bmx^*,\bmeta^*)$ for~\eqref{eq:general-primal} satisfies $L_F(\bmx^*) \ni \bmb$.
  \end{itemize}
\end{theorem}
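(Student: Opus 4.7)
The overall strategy is to treat~\eqref{eq:general-primal} as a convex program with separable quadratic objective in $\bmeta$ and convex inequality constraints $f_e(\bmx) \leq \bmeta(e)$, and to invoke Theorem~\ref{thm:saddle} to pass to the Lagrangian dual~\eqref{eq:general-energy-minimization}. A preliminary observation I would make is that every $f_e$ is nonnegative: since $F_e\geq \bmzero$ and $F_e(V) = 0$, the zero vector lies in $B(F_e)$, so $f_e(\bmx) = \max_{\bmw \in B(F_e)}\langle \bmw,\bmx\rangle \geq 0$. Consequently, the minimum of $\frac{1}{2}\|\bmeta\|_2^2$ over $\bmeta(e) \geq f_e(\bmx)$ is attained at $\bmeta(e) = f_e(\bmx)$, so~\eqref{eq:general-primal} reduces to minimizing the convex function $\Phi(\bmx) := \frac{1}{2}\bmx^\top L_F(\bmx) - \langle \bmb,\bmx\rangle$. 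Slater's condition is verified by the strictly feasible point $(\bmx,\bmeta) = (\bmzero,\bmone)$, since $f_e(\bmzero) = 0 < 1$ for every $e \in E$.

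For parts (1) and (2), the plan is to derive weak and strong duality from the Lagrangian computation preceding~\eqref{eq:general-dual} together with Lemma~\ref{lem:sup-and-max}. Weak duality follows from $P(\bmx,\bmeta,\bmphi) \leq \frac{1}{2}\|\bmeta\|_2^2 - \langle\bmb,\bmx\rangle$ for any primal--feasible $(\bmx,\bmeta)$ and $\bmphi \geq \bmzero$, combined with $\min_{\bmx,\bmeta} P(\bmx,\bmeta,\bmphi) = -\frac{1}{2}\|\bmphi\|_2^2$ for dual-feasible $\bmphi$, where Lemma~\ref{lem:sup-and-max} makes the Fenchel conjugate $\bigl(\sum_e\bmphi(e)f_e\bigr)^*(\bmb)$ vanish on the feasible region. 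This gives the $(\Leftarrow)$ direction of (1). For the converse and for (2), I would apply Theorem~\ref{thm:saddle} (legitimate by Slater): a finite primal optimum yields a Lagrange multiplier $\bmphi^* \geq \bmzero$ forming a saddle point with $(\bmx^*,\bmeta^*)$; finiteness of $\min_{\bmx,\bmeta} P(\cdot,\cdot,\bmphi^*)$ forces $\bmphi^*$ to be dual feasible, and the saddle-point identity then delivers the strong duality equating the optimal values of~\eqref{eq:general-primal} and~\eqref{eq:general-energy-minimization} up to sign.

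For (3), the plan is to solve~\eqref{eq:general-energy-minimization} by the ellipsoid method and then recover a primal optimum. The separation oracle for the exponentially many constraints is precisely submodular function minimization: given $\bmphi \geq \bmzero$, the set function $X \mapsto \sum_e \bmphi(e)F_e(X) - \bmb(X)$ is submodular (a nonnegative combination of submodular $F_e$'s minus a modular term), so its minimum and a minimizer $X^*$ are computable in polynomial time, and a negative minimum furnishes the violated inequality indexed by $X^*$. To produce $\bmx^*$ itself, I would apply the ellipsoid method a second time directly to $\Phi$; its subgradients are delivered in polynomial time because Edmonds' greedy algorithm, applied to the sorted entries of $\bmx$, simultaneously computes $f_e(\bmx)$ and an element $\bmw_e \in \partial f_e(\bmx)$. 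I expect this recovery step to be the main obstacle: for the ellipsoid to run in polynomial time one needs a priori bounds on $\|\bmx^*\|$, which require exploiting the translation invariance $\Phi(\bmx+\alpha\bmone) = \Phi(\bmx) - \alpha\bmb(V)$ to restrict to $\bmone^\perp$ (after noting that $\bmb(V) = 0$ is forced when the primal has a finite optimum), and then controlling $\|\bmx^*\|$ on that hyperplane in terms of $\|\bmphi^*\|_2$ via the strong-duality identity.

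Finally, (4) should follow by reading the KKT conditions off the saddle point produced by Theorem~\ref{thm:saddle}. Stationarity in $\bmeta$ yields $\bmeta^* = \bmphi^*$; complementary slackness combined with $0 \leq f_e(\bmx^*) \leq \bmeta^*(e) = \bmphi^*(e)$ forces $\bmphi^*(e) = f_e(\bmx^*)$ whenever $\bmphi^*(e) > 0$ and $f_e(\bmx^*) = 0$ otherwise; stationarity in $\bmx$ produces $\bmw_e \in \partial f_e(\bmx^*)$ with $\bmb = \sum_e \bmphi^*(e)\bmw_e$. In either complementary-slackness case $\bmphi^*(e)\bmw_e = f_e(\bmx^*)\bmw_e$, so $\bmb = \sum_e f_e(\bmx^*)\bmw_e$, which is exactly a member of $L_F(\bmx^*)$ by Definition~\ref{def:submodular-Laplacian}.
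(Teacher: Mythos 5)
Your proposal follows essentially the same route as the paper: both deduce (2) from Slater's condition and Theorem~\ref{thm:saddle}, both derive (4) by reading $\bmeta^*=\bmphi^*$ from stationarity in $\bmeta$, then $\bmphi^*(e)=f_e(\bmx^*)$ from complementary slackness together with $f_e\geq 0$, and finally $\bmb\in\sum_e f_e(\bmx^*)\partial f_e(\bmx^*)=L_F(\bmx^*)$ from stationarity in $\bmx$, and both settle (3) by running the ellipsoid method on~\eqref{eq:general-primal} using subgradients of the $f_e$ as a separation oracle. The one point you flag as a potential obstacle---obtaining an a priori polynomial bound on $\norm{\bmx^*}$ for the ellipsoid method, via restriction to $\bmone^\perp$ and the strong-duality identity---is a real technicality, but the paper's proof simply writes ``the ellipsoid method solves~\eqref{eq:general-primal}'' without addressing it either, so this is not a gap relative to the paper's argument.
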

\begin{proof}
  (1) Standard.

  (2) It is clear that~\eqref{eq:general-primal} satisfies Slater's condition and hence the claim holds.

  (3) Since we can compute the subgradient of $f_e$ at a given point $\bmx$, we have a separation oracle for~\eqref{eq:general-primal}. Thus the ellipsoid method solves~\eqref{eq:general-primal}.

  (4) Let $(\bmx^*, \bmeta^*)$ be an optimal solution of~\eqref{eq:general-primal}.
  Based on Theorem~\ref{thm:saddle}, there exists $\bmphi^* \geq \bmzero$ such that $(\bmx^*, \bmeta^*, \bmphi^*)$ is a saddle point of the Lagrangian $P$.
  Since
  $
      \frac{\partial P}{\partial \bmeta} = \bmeta - \bmphi,
  $
  we have $\bmphi^* = \bmeta^*$ using the saddle condition.
  By complementary slackness, if $\bmphi^*(e) > 0$, we have $f_e(\bmx^*) = \bmeta^*(e)$ for any $e \in E$.
  If $\bmphi^*(e) = 0$, we have $f_e(\bmx^*) \leq 0$, and since $f_e(\bmx^*) \geq 0$, we obtain $f_e(\bmx^*) = \bmeta^*(e)$ for $e \in E$.
  Thus, we conclude that $\bmphi^*(e) = f_e(\bmx^*)$ for $e \in E$.
  Finally, by the saddle condition for $\bmx^*$, we must have
  \[
      \sum_{e \in E} \bmphi^*(e) \partial f_e(\bmx^*) = \sum_{e \in E} f_e(\bmx^*)\partial f_e(\bmx^*) \ni \bmb,
  \]
  which implies $\bmb \in L_F(\bmx^*)$.
\end{proof}

\subsection{Flow-like formulation for directed graphs and hypergraphs}\label{subsec:system-constant-arity}
If each submodular function $F_e\colon 2^V \to \bbR_+$ is of constant arity, we can use a different formulation.
First, we enumerate all extreme points of the base polytope $B(F_e)$ for each $F_e$.
Let $\caV_e$ be the set of extreme points of $B(F_e)$ ($e \in E$).
The constraint $f_e(\bmx) \leq \bmeta(e)$ in~\eqref{eq:general-primal} is equivalent to $\inprod{\bmw, \bmx} \leq \bmeta(e)$ for all extreme points $\bmw \in \caV_e$.
Therefore,~\eqref{eq:general-primal} is equivalent to
\begin{align}\label{eq:flowlike-primal}
    \min_{\bmx \in \bbR^V,\bmeta \in \bbR^E} \frac{1}{2}\norm{\bmeta}_2^2 - \inprod{\bmb, \bmx} \quad \text{subject to} \quad \inprod{\bmw, \bmx} \leq \bmeta(e) \quad (\bmw \in \caV_e, e \in E)
\end{align}

For each extreme point $\bmw \in \caV_e$, we introduce a ``flow'' variable $\bmphi(e, \bmw)$.
By a calculation similar to that in Section~\ref{subsec:system-general}, we can obtain the following dual problem:
\begin{align}\label{eq:flowlike-dual}
    \min_{\bmphi \geq \bmzero} \frac{1}{2} \sum_{e\in E} \left( \sum_{\bmw \in \caV_e} \bmphi(e, \bmw) \right)^2 \quad \text{subject to} \quad \sum_{e \in E}\sum_{\bmw \in \caV_e} \bmphi(e, \bmw)\bmw = \bmb.
\end{align}
Then, the constraint $\sum_{e \in E}\sum_{\bmw \in \caV_e} \bmphi(e, \bmw)\bmw = \bmb$ can be interpreted as a ``flow boundary constraint'', as illustrated in the following examples.

\begin{example}[Cut functions of directed graphs]
    Let $G = (V, E)$ be a directed graph and $F_e:2^V \to \bbR_+\;(e \in E)$ be the cut function associated with $e$.
    The extreme points of $B(F_e)$ for $e=uv$ are $\bmzero$ and $\bme_u - \bme_v$.
    Since the value of $\bmphi(e, \bmzero)$ does not interact with the constraint, we can assume that $\bmphi(e, \bmzero) = 0$.
    Then, the constraint is the ordinary flow boundary constraint: $\sum_{uv \in E} \bmphi(uv) - \sum_{vu \in E} \bmphi(vu) = \bmb(u)$ ($u \in V$), where we denote $\bmphi(uv, \bme_v - \bme_u)$ by $\bmphi(uv)$.
    Now~\eqref{eq:flowlike-dual} is equivalent to the quadratic cost flow problem, which can be solved in $O(|E|^4 \log |E|)$ time~\citep{Vegh:2012jm}.
\end{example}

\begin{example}[Cut functions of hypergraphs]
    Let $G = (V, E)$ be a hypergraph and $F_e\colon 2^V \to \bbR_+\;(e \in E)$ be the cut function associated with $e$.
%    Without loss of generality, one can assume $e = \{1, 2, \dots, d\}$.
    The extreme points of $B(F_e)$ are in the form of $\bme_u - \bme_v$ ($u, v \in e$, $u \neq v$).
    The value of $\bmphi(e, \bme_u - \bme_v)$ can be interpreted as a ``flow'' from $v$ to $u$  through a hyperedge $e$. %\ynote{from $u$ to $v$?}\tnote{Since $\bme_u$}
    Indeed, we can construct the equivalent (ordinary) flow network $G'$ as follows.
    The vertex set of $G'$ is $V$, and for each distinct $u, v \in e$, an arc $uv$ in $G'$ is drawn.
    Then, any flow $\bmphi'$ in $G'$ with the boundary $\bmb$, which can be computed via minimizing a quadratic function under a flow constraint, corresponds to the original variable $\bmphi$ in~\eqref{eq:flowlike-dual}.
    % Therefore,~\eqref{eq:flowlike-dual} is equilvalent to the quadratic flow problem in $G'$.
\end{example}

\section{Submodular Laplacian Regression}\label{sec:regression}

In this section, we prove Theorem~\ref{the:intro-regression}.

First, we explain when a submodular Laplacian system, or equivalently~\eqref{eq:general-energy-minimization}, is feasible.
Let $F\colon 2^V \to \bbR_+^E$ be a submodular transformation and $\bmb \in \bbR^V$ be a vector.
Then, we can observe that~\eqref{eq:general-energy-minimization} is feasible if and only if
there exists no $S \subseteq V$ such that $F_e(S) = 0$ for every $e \in E$ and $\bmb(S) > 0$, or equivalently,
$F_e(S) = 0$ for every $e \in E$ implies $\bmb(S) \leq 0$.
We define $\ker(F) := \set{ S \subseteq V: F_e(S) = 0 \; (e\in E)}$, which is the set of $S \subseteq V$ that minimizes all $F_e$ ($e \in E$).
Then, the regression problem reduces to the following optimization problem.
\begin{align}\label{eq:minnorm-polyhedron}
  \min_{\bmp \in \bbR^V} \|\bmp\|_2^2 \quad \text{subject to } (\bmb+\bmp)(S) \leq 0 \; (S \in \ker(F))
\end{align}
Since the minimizers of each $F_e$ form a distributive lattice, $\ker(F)$ is also a distributive lattice.
Hence,~\eqref{eq:minnorm-polyhedron} can be considered as the minimum norm point problem in $P_\caD(-\bmb)$, where $\caD = \ker F$.

To handle~\eqref{eq:minnorm-polyhedron} efficiently, we use the Birkoff representation of $\ker(F)$ because it has a polynomial size.
To this end, note that $\ker(F)$ is the lattice of minimizers of $\sum_{e \in E} F_e$, since each $F_e$ is nonnegative.
Then, the Birkoff representation can be efficiently constructed from the minimum norm point of $B(\sum_e F_e)$.
Refer to~\cite[Section~7.1~(a)]{Fujishige2005} for further details.

The minimum norm point problem~\eqref{eq:minnorm-polyhedron} is slightly different from the one used for submodular function minimization considering that
(i) the target polytope is a submodular polyhedron $P_\caD(-\bmb)$ rather than a base polyhedron $B_\caD(-\bmb)$, and
(ii) the lattice $\caD$ is not a Boolean lattice $2^V$ but a distributive lattice on $V$.
In the following subsection, we present two algorithms for solving this problem.
The first one is the standard Frank-Wolfe iterative algorithm (Section~\ref{subsec:frank-wolfe}), while the other is a  combinatorial algorithm (Section~\ref{subsec:combinatorial}).

\subsection{Frank-Wolfe algorithm}\label{subsec:frank-wolfe}
Given the Birkoff representation of $\caD = \ker F$ as a directed graph with $n$ vertices and $m$ arcs, we can optimize linear functions over the polyhedron $P_\caD(-\bmb)$ in $O(m + n\log {n})$ time using the greedy algorithm.
%\ynote{The time complexity should depend on the size of $\caD$, which can be $|V|^2$.}
This fact suggests to use the Frank-Wolfe algorithm~\citep{Jaggi2013} to solve~\eqref{eq:minnorm-polyhedron}.
To this end, we restrict ourselves to a compact region $\caC := P_\caD(-\bmb) \cap \set{\bmx \in \bbR^V: \bmx \geq -\bmb }$.
Based on the analysis of the Frank-Wolfe algorithm in~\cite{Jaggi2013}, we need to bound the squared Euclidean diameter of $\caC$.
Trivially, $|V| \cdot \norm{-\bmb}_2^2$ is an upper bound.
We obtain the following convergence rate:
\begin{theorem}\label{the:regression-by-Frank-Wolfe}
    Let $\bmp^k \in \caC$ be a sequence generated by the Frank-Wolfe algorithm for~\eqref{eq:minnorm-polyhedron} ($k = 0, 1, \dots$) and $\bmp^*$ be the optimal solution.
    Then, $\norm{\bmp^k}_2^2 - \norm{\bmp^*}_2^2 \leq \frac{4\abs{V}\cdot \norm{b}_2^2}{k+1}$.
    Each iteration of the Frank-Wolfe algorithm takes $O(m + n\log n)$ time.
    %\ynote{should be fixed.}
    %In particular, this yields $O(|V|^2)$ running time per iteration.
\end{theorem}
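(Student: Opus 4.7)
The plan is to apply the standard convergence analysis of the Frank-Wolfe algorithm from \citet{Jaggi2013}: for a convex objective with curvature constant $C_f$ over a compact convex set, after $k$ iterations one has $f(\bmp^k) - f(\bmp^*) \leq 2 C_f / (k+2)$. The theorem thus splits into two tasks: (a) bound $C_f$ to obtain the convergence rate, and (b) implement the linear-minimization oracle over $\caC$ in $O(m + n \log n)$ time per iteration.

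For (a), since $f(\bmp) = \|\bmp\|_2^2$ has gradient $\nabla f(\bmp) = 2\bmp$ which is $2$-Lipschitz, a standard estimate gives $C_f \leq 2 D^2$, where $D$ denotes the Euclidean diameter of $\caC$. The diameter bound $D^2 \leq |V| \cdot \|\bmb\|_2^2$ is the trivial estimate alluded to in the text: it follows from combining the componentwise bound $\bmp \geq -\bmb$ with the polyhedral inequalities $\bmp(S) \leq -\bmb(S)$ for $S \in \caD$ (in particular for $S = V \in \caD$) to derive coordinate-wise bounds on any $\bmp \in \caC$ of magnitude $O(\|\bmb\|_\infty)$, so that squared differences aggregate to at most $|V| \cdot \|\bmb\|_2^2$. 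Substituting $C_f \leq 2|V| \cdot \|\bmb\|_2^2$ into Jaggi's bound and using $\frac{1}{k+2} \leq \frac{1}{k+1}$ to simplify yields the stated $\|\bmp^k\|_2^2 - \|\bmp^*\|_2^2 \leq 4|V| \|\bmb\|_2^2 / (k+1)$.

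For (b), each iteration requires solving the linear program $\min_{\bmp \in \caC} \langle 2\bmp^k, \bmp\rangle$. I would invoke Edmonds' greedy algorithm for submodular polyhedra, adapted to distributive lattices: given the Birkoff representation of $\caD = \ker(F)$ as a DAG on $n$ vertices and $m$ arcs, compute a lexicographic topological ordering of the DAG that respects the partial order $\preceq$ and is tie-broken by the coordinates of $\nabla f(\bmp^k)$, then evaluate the greedy formula along this ordering. With a priority-queue implementation, the sort and evaluation together run in $O(m + n \log n)$ time.

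The main obstacle I anticipate is justifying that the greedy procedure yields an extreme point of $\caC$ itself, rather than merely of $P_\caD(-\bmb)$, since $\caC$ carries the additional box constraint $\bmp \geq -\bmb$; one must also carefully handle the interaction between the poset constraints from the Birkoff representation and the objective-induced coordinate ordering inside the priority queue to maintain the claimed complexity. Both issues can be addressed by appealing to the greedy theory on distributive lattices developed in \citet{Fujishige2005}.
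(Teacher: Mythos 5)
Your proposal reproduces the paper's approach: the paper itself only sketches the argument (introduce the compact region $\caC$, invoke Jaggi's curvature-based rate, use the stated diameter bound $|V|\cdot\|\bmb\|_2^2$, and implement the linear-minimization oracle by greedy on the Birkoff representation of $\caD$), and your reconstruction of the missing arithmetic — $C_f = 2D^2$ for the quadratic objective, then $2C_f/(k+2) \leq 4|V|\|\bmb\|_2^2/(k+1)$ — is exactly the intended derivation. The one caveat you raise at the end is in fact a real loose end that the paper also leaves unaddressed: the greedy oracle the paper cites is for $P_\caD(-\bmb)$, whereas Frank--Wolfe needs linear minimization over the compact set $\caC$ (which carries the additional lower-bound constraint), and the text does not spell out how the two are reconciled — so your instinct to appeal to the distributive-lattice greedy theory in \citet{Fujishige2005} to close this is appropriate, though the paper itself stops short of doing so.
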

Theorem~\ref{the:regression-by-Frank-Wolfe} is unsatisfactory because we cannot guarantee $L_F(\bmx) \ni \bmb + \bmp^k$ has a solution for any $k$.
% and hence it is not useful to compute $\bmx \in \bbR^V$ such that $L_F(\bmx)$ has a vector close to $\bmb$.
The algorithm discussed in the next subsection resolves this issue.

\subsection{Combinatorial algorithm}\label{subsec:combinatorial}

In this section, we present a combinatorial algorithm for the minimum norm point problem~\eqref{eq:minnorm-polyhedron}.
First, we show that this problem can be reduced to parametrized submodular minimization.
Although we only need to consider a modular function $H\colon 2^V \to \bbR$, that is, $S \mapsto -\bmb(S)$, to solve~\eqref{eq:minnorm-polyhedron}, we aim to describe it in the most general case.
%Note that the function $F$ here is nothing to do with the submodular transformation considered in~\eqref{eq:minnorm-polyhedron}.
Formally, we need to solve the following:
\begin{align}\label{eq:minnorm}
    \begin{split}
        \max_{\bmw \in P_\caD(H)} & \quad -\frac{1}{2}\norm{\bmw}_2^2.
    \end{split}
\end{align}
Since $P_\caD(H)$ is down-closed, the optimal solution must be a nonpositive vector.
Therefore, we consider the following problem:
\begin{align}\label{eq:regularizedLovasz}
    \begin{split}
        \min_{\bmx \in \bbR^V_+} & \quad h(\bmx) + \frac{1}{2}\norm{\bmx}_2^2, \\
    \end{split}
\end{align}
where $h:\bbR^V \to \bbR$ is the \Lovasz extension of $H$.

\begin{lemma}
    The problems~\eqref{eq:minnorm} and~\eqref{eq:regularizedLovasz} are strong dual to each other.
    Furthermore, if $\bmw^*$ is the optimal solution of~\eqref{eq:minnorm}, then $\bmx^* := - \bmw^*$ is the optimal solution for~\eqref{eq:regularizedLovasz}, and vice versa.
\end{lemma}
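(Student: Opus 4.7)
The plan is to apply Lagrangian duality (Theorem~\ref{thm:saddle}) to~\eqref{eq:regularizedLovasz}, treating $\bmx \geq \bmzero$ as an explicit constraint with multiplier $\bm{\lambda} \geq \bmzero$ and substituting $h(\bmx) = \max_{\bmw \in B_\caD(H)} \inprod{\bmw, \bmx}$. This reformulates~\eqref{eq:regularizedLovasz} as the saddle-point problem
\[
    \min_{\bmx \in \bbR^V} \; \max_{\substack{\bmw \in B_\caD(H) \\ \bm{\lambda} \geq \bmzero}} \Bigl[ \inprod{\bmw - \bm{\lambda}, \bmx} + \tfrac{1}{2}\norm{\bmx}_2^2 \Bigr].
\]
Slater's condition is clear (any strictly positive $\bmx$ works) and the primal is bounded below, since $h(\bmx) + \frac{1}{2}\norm{\bmx}_2^2 \geq \inprod{\bmw_0,\bmx} + \frac{1}{2}\norm{\bmx}_2^2 \geq -\frac{1}{2}\norm{\bmw_0}_2^2$ for any fixed $\bmw_0 \in B_\caD(H)$. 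Hence Theorem~\ref{thm:saddle} grants strong duality and the existence of a saddle point $(\bmx^*, \bmw^*, \bm{\lambda}^*)$.

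After swapping min and max, the inner minimization over $\bmx$ is an unconstrained convex quadratic whose optimum is $\bmx = \bm{\lambda} - \bmw$, with value $-\tfrac{1}{2}\norm{\bmw - \bm{\lambda}}_2^2$. The dual therefore becomes $\max\bigset{-\tfrac{1}{2}\norm{\bmw - \bm{\lambda}}_2^2 : \bmw \in B_\caD(H),\, \bm{\lambda} \geq \bmzero}$. Invoking the classical identity $P_\caD(H) = B_\caD(H) - \bbR_+^V$ (the submodular polyhedron is the down-closure of its base, proved by lifting any $\bmw' \in P_\caD(H)$ to a base element via the greedy algorithm on a linear extension of $\caD$) and substituting $\bmw' := \bmw - \bm{\lambda}$, the dual collapses to $\max_{\bmw' \in P_\caD(H)} -\tfrac{1}{2}\norm{\bmw'}_2^2$, which is exactly~\eqref{eq:minnorm}. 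This proves that the optimal values of~\eqref{eq:minnorm} and~\eqref{eq:regularizedLovasz} coincide.

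For the primal-dual correspondence, the saddle relation $\bmx^* = \bm{\lambda}^* - \bmw^*$ coming from the inner minimizer gives $\bmw'^* := \bmw^* - \bm{\lambda}^* = -\bmx^*$, that is, $\bmx^* = -\bmw'^*$. For the converse direction, any optimizer $\bmw'^* \in P_\caD(H)$ of~\eqref{eq:minnorm} must satisfy $\bmw'^* \leq \bmzero$: if $\bmw'^*(v) > 0$, the down-closedness of $P_\caD(H)$ lets us reduce this coordinate to zero while staying feasible and strictly decreasing the norm, contradicting optimality. Hence $\bmx^* := -\bmw'^* \in \bbR_+^V$ is primal-feasible, and strong duality together with matching objective values identifies it as the primal optimum.

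The main technical obstacle is justifying the min-max exchange rigorously, since neither $B_\caD(H)$ (which may be unbounded on a non-Boolean distributive lattice) nor the multiplier cone $\set{\bm\lambda \geq \bmzero}$ is compact. I would handle this by appealing directly to the saddle-point existence guaranteed by Theorem~\ref{thm:saddle} (which is formulated precisely to sidestep Sion-type compactness hypotheses under Slater's condition), or alternatively by a truncation argument using coercivity of the inner minimization in $\bmx$ to restrict $\norm{\bmw - \bm\lambda}$ a priori.
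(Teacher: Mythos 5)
Your proof is correct and follows essentially the same route as the paper's: both rewrite $h(\bmx)=\max_{\bmw\in B_\caD(H)}\inprod{\bmw,\bmx}$, swap the resulting $\min$--$\max$, and invoke the identity $P_\caD(H)=B_\caD(H)-\bbR_+^V$ to land on~\eqref{eq:minnorm}. The only structural difference is packaging: you dualize the constraint $\bmx\geq\bmzero$ with an explicit multiplier $\bm\lambda$, which makes the inner minimization over $\bmx$ unconstrained and produces the combined variable $\bmw'=\bmw-\bm\lambda$ ranging over $P_\caD(H)$; the paper instead keeps $\bmx\geq\bmzero$ as a hard constraint in the inner min and already replaces $\max_{\bmw\in B_\caD(H)}$ by $\max_{\bmw\in P_\caD(H)}$ at the outset, using the same $P=B-\bbR_+^V$ fact. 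These are equivalent ways of handling the nonnegativity. One small inaccuracy: it is \emph{not} true that ``any strictly positive $\bmx$ works'' for Slater's condition, because by Lemma~\ref{lem:subset} the effective domain of $h$ consists only of vectors whose superlevel sets lie in $\caD$, and a generic positive $\bmx$ fails this. You should instead exhibit, e.g., $\bmx=\sum_{i}\bmone_{S_i}$ for a maximal chain $\emptyset=S_0\subset\dots\subset S_k=V$ in $\caD$, which is strictly positive (since $S_k=V$) and has $h(\bmx)<+\infty$. With that fix the argument goes through.
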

\begin{proof}
    Using the Fenchel strong duality, we have
    \begin{align*}
        \min_{\bmx \in \bbR^V_+} \left[ h(\bmx) + \frac{1}{2}\norm{\bmx}_2^2 \right]
        &= \min_{\bmx \in \bbR^V_+}\max_{\bmw \in B_\caD(H)} \left[ \bmw^\top\bmx  + \frac{1}{2}\norm{\bmx}_2^2 \right] \\
        &= \min_{\bmx \in \bbR^V_+}\max_{\bmw \in P_\caD(H)} \left[ \bmw^\top\bmx  + \frac{1}{2}\norm{\bmx}_2^2 \right]
        \tag{since $\bmx \in \bbR^V_+$} \\
        &= \max_{\bmw \in P_\caD(H)} \min_{\bmx \in \bbR^V_+}\left[ \bmw^\top\bmx  + \frac{1}{2}\norm{\bmx}_2^2 \right]
        \tag{by the Fenchel strong duality} \\
        &= \max_{\bmw \in P_\caD(H), \bmw \leq \bmzero} -\frac{1}{2}\norm{\bmw}_2^2.
    \end{align*}
    The second assertion can be checked by direct calculation.
    %\ynote{should be $P_\caD(F)$? should be $B_\caD(F)$? I'm not sure what $\max_{\bmw \in B_\caD(F)}\langle \bmw,\bmx\rangle$ (instead of $\max_{\bmw \in B(F)}\langle \bmw,\bmx\rangle$) is. Is this called the \Lovasz extension of $f:\caD \to \bbR$?}
    %\tnote{$f(\bmx) = \max_{\bmw\in B_\caD(F)} \inprod{\bmw, \bmx}$ is the definition of \Lovasz extension in \cite[Section 6.3]{Fujishige2005}}
\end{proof}

Therefore we focus on~\eqref{eq:regularizedLovasz}.
In what follows, we will show that an optimal solution to~\eqref{eq:regularizedLovasz} can be constructed by solving \emph{parametrized submodular minimization}:
\begin{align}\label{eq:paramSFM}
    \begin{split}
        \min_{S \in \caD} & \quad H(S) + \alpha\abs{S},
    \end{split}
\end{align}
where $\alpha \in \bbR$ is a parameter.
Let $A^\alpha$ be an optimal solution to~\eqref{eq:paramSFM}.

\begin{lemma}[{\citet[Proposition~8.2]{Bach2010}}]\label{lem:bach}
    If $\alpha < \beta$, then $A^\beta \subseteq A^\alpha$.
\end{lemma}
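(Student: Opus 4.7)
My plan is to use the standard submodular ``uncrossing'' argument applied to optimal solutions for the two parameters. Define $F_\gamma(S) := H(S) + \gamma|S|$, which is submodular on $\caD$ because $|\cdot|$ is modular and $H$ is submodular. Set $A := A^\alpha$ and $B := A^\beta$. Since $\caD$ is a distributive lattice and both $A,B \in \caD$, both $A \cap B$ and $A \cup B$ belong to $\caD$, so they are valid competitors for the respective minimization problems.

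Next I would write out the two optimality inequalities $F_\alpha(A \cup B) \geq F_\alpha(A)$ and $F_\beta(A \cap B) \geq F_\beta(B)$, and rearrange them using $|A \cup B| - |A| = |B \setminus A|$ and $|A \cap B| - |B| = -|B \setminus A|$ into
\begin{align*}
H(A \cup B) - H(A) &\geq -\alpha \cdot |B \setminus A|, \\
H(B) - H(A \cap B) &\leq -\beta \cdot |B \setminus A|.
\end{align*}
Separately, submodularity of $H$ yields $H(A \cup B) - H(A) \leq H(B) - H(A \cap B)$. Chaining the three inequalities gives
\[
-\alpha \cdot |B \setminus A| \;\leq\; H(A \cup B) - H(A) \;\leq\; H(B) - H(A \cap B) \;\leq\; -\beta \cdot |B \setminus A|,
\]
so $(\beta - \alpha) |B \setminus A| \leq 0$. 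Since $\alpha < \beta$, this forces $|B \setminus A| = 0$, i.e., $A^\beta \subseteq A^\alpha$.

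I do not expect a genuine obstacle here: the only subtlety is book-keeping of signs in the cardinality terms, and confirming that $A \cap B$ and $A \cup B$ lie in $\caD$ (which is immediate from $\caD$ being a distributive lattice). The argument is entirely self-contained once one accepts that $H$ is submodular on $\caD$; in our application $H(S) = -\bmb(S)$ is even modular, so submodularity becomes equality, but the chain above still goes through verbatim.
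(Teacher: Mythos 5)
Your proof is correct. The paper itself does not prove this lemma; it is cited from \citet[Proposition~8.2]{Bach2010}, so there is no in-paper argument to compare against. Your uncrossing argument is the standard and essentially canonical way to establish this monotonicity: use $A \cup B$ as a competitor for the $\alpha$-problem, use $A \cap B$ as a competitor for the $\beta$-problem, link the two via submodularity of $H$, and observe that the cardinality increments on both sides equal $\pm|B \setminus A|$. The sign bookkeeping in your chain is right, and the conclusion $(\beta - \alpha)|B \setminus A| \leq 0$ with $\alpha < \beta$ strict forces $B \subseteq A$. You also correctly note that $A \cup B, A \cap B \in \caD$ because $\caD$ is closed under unions and intersections, which is exactly what is needed to invoke optimality of $A$ and $B$ against these competitors.

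One small remark worth making explicit: your argument actually proves something slightly stronger than the stated lemma as written. Since you never invoke any canonical choice of minimizer, the chain shows that \emph{every} optimal $A^\beta$ is contained in \emph{every} optimal $A^\alpha$ whenever $\alpha < \beta$ strictly. This is exactly the version the paper needs when it later writes $\{\bmz > \alpha\} \subseteq A^\alpha \subseteq \{\bmz \geq \alpha\}$, because $\bmz$ is built from a supremum over a family of arbitrary parametric minimizers. So not only is your proof valid, it furnishes the form of the statement that the downstream lemma implicitly relies on.
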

For a vector $\bmx \in \bbR^V$ and a scalar $\alpha \in \bbR$, we write $\set{\bmx \geq \alpha}$ (resp., $\set{\bmx > \alpha}$) to denote the set $\set{v \in V \mid \bmx(v) \geq \alpha}$ (resp., $\set{v \in V \mid \bmx(v) > \alpha}$).
\begin{lemma}\label{lem:subset}
    If $h(\bmx) < +\infty$, then $\set{\bmx \geq \alpha} \in \caD$ for any $\alpha \in \bbR$.
\end{lemma}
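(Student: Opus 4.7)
The plan is to argue by contrapositive: if there exists $\alpha \in \bbR$ with $\set{\bmx \geq \alpha} \notin \caD$, I will show $h(\bmx) = \sup_{\bmw \in B_\caD(H)} \langle \bmw, \bmx\rangle = +\infty$. To witness this I exhibit a direction $\bmd \in \bbR^V$ lying in the recession cone of $B_\caD(H)$, i.e.\ satisfying $\bmd(T) \leq 0$ for every $T \in \caD$ and $\bmd(V) = 0$, with $\langle \bmd, \bmx\rangle > 0$. Fixing any $\bmw_0 \in B_\caD(H)$ (nonempty since $H$ is submodular on the distributive lattice $\caD$), the ray $\bmw_0 + \lambda\bmd$ $(\lambda \geq 0)$ stays in $B_\caD(H)$ and $\langle \bmw_0 + \lambda \bmd, \bmx\rangle \to +\infty$, contradicting $h(\bmx) < +\infty$.

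I classify the failure of $S := \set{\bmx \geq \alpha}$ to lie in $\caD$ via the Birkoff representation $P = (\Pi(V), \preceq)$. Either (i) $S$ splits some equivalence class $[i] \in \Pi(V)$, yielding $u, v \in [i]$ with $\bmx(u) \geq \alpha > \bmx(v)$; or (ii) $S$ is a union of equivalence classes but the corresponding subfamily of $\Pi(V)$ is not a lower ideal, producing $[i] \prec [j]$ with $[j] \subseteq S$ and $[i] \cap S = \emptyset$, and hence $u \in [i]$, $v \in [j]$ with $\bmx(u) < \alpha \leq \bmx(v)$.

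In case (i) take $\bmd := \bme_u - \bme_v$. Every $T \in \caD$ respects the partition $\Pi(V)$, so contains $[i]$ entirely or not at all, giving $\bmd(T) = 0$; moreover $\bmd(V) = 0$ and $\langle \bmd, \bmx\rangle = \bmx(u) - \bmx(v) > 0$. In case (ii) take $\bmd := \bme_v - \bme_u$. For $T \in \caD$ containing $v$, the partition property gives $[j] \subseteq T$, and then the lower-ideal property gives $[i] \subseteq T$, so $u \in T$ and $\bmd(T) = 0$; for $T$ not containing $v$, $\bmd(T) = -\bbI[u \in T] \leq 0$; and $\bmd(V) = 0$, while $\langle \bmd, \bmx\rangle = \bmx(v) - \bmx(u) > 0$. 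Either way $\bmd$ is the desired recession-cone direction.

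I do not foresee a serious obstacle. The argument reduces entirely to the two-case analysis of how a subset can fail to be a lower ideal of the Birkoff poset, combined with the standard identification of the recession cone of $B_\caD(H)$; the conceptual point is that both failure modes produce a $\pm(\bme_u - \bme_v)$ direction that every base-polyhedron constraint $\bmw(T) \leq H(T)$ with $T \in \caD$ is blind to, which is precisely why any $\bmx$ incompatible with the poset yields an unbounded support function.
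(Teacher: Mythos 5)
Your proof is correct and follows essentially the same route as the paper's: argue by contrapositive, extract a direction $\pm(\bme_u - \bme_v)$ lying in the recession cone of $B_\caD(H)$ with positive inner product against $\bmx$, and conclude $h(\bmx)=+\infty$ by letting $t\to+\infty$ along $\bmw^* + t\bmd$. The paper compresses your two-case dichotomy into the single observation that $\set{\bmx\geq\alpha}\notin\caD$ produces $u,v\in V$ with $u\preceq v$ (in the quasi-order on $V$ induced by the Birkoff poset, which allows $u\sim v$) and $\bmx(u)<\bmx(v)$, exactly the two situations you treat separately.
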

\begin{proof}
    Here, we prove the contrapositive.
    Suppose that $\{\bmx \geq \alpha\} \not\in \caD$ for some $\alpha \in \bbR$; this indicates that there exist $u, v \in V$ such that $u \preceq v$ and $\bmx(u) < \bmx(v)$.
    Let us take an arbitrary $\bmw^* \in B(H)$ and consider $\bmw^* + t(\bme_v - \bme_u)$, where $t > 0$.
    Then, since $u \preceq v$, any $S \in \caD$ containing $v$ must also contain $u$.
    Thus $\bmw^* + t(\bme_v - \bme_u) \in B(H)$.
    Since $\bmx(u) < \bmx(v)$, $\langle \bmx, \bmw^* + t(\bme_v - \bme_u)\rangle$ attains infinity as $t \to +\infty$, that is, $h(\bmx) = +\infty$.
\end{proof}

\begin{lemma}
    Define $\bmz \in \bbR_+^V$ as $\bmz(v) := \max\bigset{\sup\{\alpha : v \in A^\alpha\}, 0}\;(v \in V)$.
    Then $\bmz$ is a minimizer of~\eqref{eq:regularizedLovasz}.
\end{lemma}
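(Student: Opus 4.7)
The plan is to express the objective $h(\bmx) + \frac{1}{2}\|\bmx\|_2^2$ as an integral over a threshold parameter $\alpha > 0$, derive a universal lower bound from the parametrized minimization~\eqref{eq:paramSFM}, and then verify that $\bmz$ achieves this bound. The three enabling facts will be the level-set representation of the \Lovasz extension, Lemma~\ref{lem:subset}, and the nested structure provided by Lemma~\ref{lem:bach}.

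First, since $H(\emptyset) = 0$ and the greedy algorithm computing $h$ in the distributive-lattice setting processes coordinates of $\bmx$ in decreasing order along a linear extension of the poset of $\caD$, for any $\bmx \in \bbR_+^V$ whose superlevel sets lie in $\caD$ one has $h(\bmx) = \int_0^\infty H(\{\bmx \geq \alpha\})\,d\alpha$. Combined with the identity $\frac{1}{2}\|\bmx\|_2^2 = \int_0^\infty \alpha\,|\{\bmx \geq \alpha\}|\,d\alpha$, this gives
\begin{align*}
h(\bmx) + \frac{1}{2}\|\bmx\|_2^2 = \int_0^\infty \bigl(H(\{\bmx \geq \alpha\}) + \alpha\,|\{\bmx \geq \alpha\}|\bigr)\,d\alpha.
\end{align*}
By Lemma~\ref{lem:subset}, any feasible $\bmx \in \bbR_+^V$ with finite objective has $\{\bmx \geq \alpha\} \in \caD$, so the integrand at each $\alpha$ is bounded below pointwise by $H(A^\alpha) + \alpha|A^\alpha|$. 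Integrating yields the universal lower bound $\int_0^\infty (H(A^\alpha) + \alpha|A^\alpha|)\,d\alpha$ for every feasible point.

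Next, I would show $\bmz$ attains this bound. Writing $\tau(v) := \sup\{\alpha : v \in A^\alpha\}$ so that $\bmz(v) = \max\{\tau(v), 0\}$, Lemma~\ref{lem:bach} guarantees that the family $\{A^\alpha\}$ is nested, so $v \in A^\alpha$ whenever $\alpha < \tau(v)$ and $v \notin A^\alpha$ whenever $\alpha > \tau(v)$. Therefore $\{\bmz \geq \alpha\} = A^\alpha$ for all but at most $|V|$ values of $\alpha > 0$ (those equal to some $\tau(v)$). In particular $\{\bmz \geq \alpha\} \in \caD$ almost everywhere, so the integral representation applies to $\bmz$ and evaluates to exactly $\int_0^\infty (H(A^\alpha) + \alpha|A^\alpha|)\,d\alpha$, matching the lower bound. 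Consequently $\bmz$ is optimal.

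The main obstacle I anticipate is the careful book-keeping around the measure-zero set of thresholds $\alpha \in \{\tau(v) : v \in V\}$ at which $\{\bmz \geq \alpha\}$ may fail to equal $A^\alpha$; one has to argue that different selections of minimizer at these finitely many breakpoints do not change either side of the integral identity, which follows from the nested structure and the finiteness of distinct $A^\alpha$'s. A secondary technical point is to pin down the level-set form of the \Lovasz extension in the distributive-lattice generality of the paper (as opposed to on $2^V$), which is routine once one orders coordinates by value while respecting the poset underlying $\caD$ and invokes the standard greedy computation of $\max_{\bmw \in B_\caD(H)}\langle \bmx, \bmw\rangle$. Finally, one should check that $\bmz$ is finite, which holds because $A^\alpha = \emptyset$ for all sufficiently large $\alpha$.
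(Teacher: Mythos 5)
Your proposal is correct and follows essentially the same route as the paper's own proof: both use the level-set integral representation of the objective, invoke Lemma~\ref{lem:subset} so that any feasible competitor's superlevel sets lie in $\caD$, use Lemma~\ref{lem:bach} together with the definition of $\bmz$ to get $\{\bmz \geq \alpha\} = A^\alpha$ almost everywhere, and conclude by a pointwise comparison of integrands. The only cosmetic difference is that you phrase the final step as achieving a universal lower bound rather than as a direct chain of equalities/inequalities, and you flag the measure-zero and distributive-lattice technicalities more explicitly.
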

\begin{proof}
    We can check that $\{\bmz > \alpha\} \subseteq A^\alpha \subseteq \{\bmz \geq \alpha\}$ for $\alpha \geq 0$ using the definition of $\bmz$ and Lemma~\ref{lem:bach}. %\ynote{are we assuming $\alpha \geq 0$ (otherwise it implies $V \subseteq A^\alpha$)? Also, how did we use Lemma~\ref{lem:subset}?}
    Therefore, $A^\alpha = \{\bmz \geq \alpha\}$ almost everywhere.
    Let $\bmx \in \bbR_+^V$ be a feasible solution such that $h(\bmx) < +\infty$.
    Then for any $\alpha \in \bbR$, $\{\bmx \geq \alpha\} \in \caD$ holds through Lemma~\ref{lem:subset}.
    Furthermore, we have
    \begin{align*}
        h(\bmz) + \frac{1}{2}\norm{\bmz}_2^2
        &= \int_0^\infty H(\{\bmz \geq \alpha\}) d\alpha + \sum_{v \in V} \int_0^{\bmz(v)} \alpha d\alpha \\
        &= \int_0^\infty \left[ H(\{\bmz \geq \alpha\}) d\alpha + \sum_{v \in V} \alpha\bmone_{\alpha \geq \bmz(v)} \right] d\alpha \\
        &\leq \int_0^\infty \left[ H(\{\bmx \geq \alpha\}) d\alpha + \sum_{v \in V} \alpha\bmone_{\alpha \geq \bmx(v)} \right] d\alpha \\
        &= h(\bmx) + \frac{1}{2}\norm{\bmx}_2^2,
    \end{align*}
    where the inequality follows since the integrand is equal to $H(A^\alpha) + \alpha\abs{A^\alpha}$ almost everywhere.
\end{proof}

Now, we consider the following modular case: $H(S) = - \sum_{i \in S} \bmb(i)$.
Considering the abovementioned arguments, we only need to solve
$\min_{S \in \caD} \sum_{i \in S} (\alpha - \bmb(i))$ for all $\alpha \in \bbR$.
Using the approach used in~\cite{Picard1982}, we define the following directed graph.
Let $G' = (U, A)$ be the digraph corresponding to the Birkoff representation of $\caD$.
Then, define a directed graph $G = (U \cup \{s, t\}, A \cup \bar{A})$, where
$\bar{A} := \{ su : u \in U\} \cup \{ ut: u \in U\}$.
In addition, define a capacity function $c$ on $\in A \cup \bar{A}$ as
\begin{align}\label{eq:capacity}
    c(a) :=
    \begin{cases}
        +\infty & \text{if $a \in A$,} \\
        \min\{- \alpha + \bmb(u), 0\} & \text{if $a = su$,} \\
        \min\{\alpha - \bmb(u), 0\} & \text{if $a = ut$}.
    \end{cases}
\end{align}
Then, the minimum $st$-cuts in $G$ provide the desired minimizers.
Refer to Figure~\ref{fig:modular-optimization} for an illustrative example.

\begin{figure}[t!]
\centering
\begin{minipage}[b]{0.45\linewidth}
\centering
\begin{tikzpicture}[line width=1pt]
	\tikzstyle{nodestyle} = [circle, draw, inner sep=4pt, minimum height=25pt];
	\tikzstyle{edgestyle} = [-{Triangle}];
	\node[nodestyle] (v4) {$4$};
	\node[nodestyle] (v12) at ([xshift=-30pt, yshift=50pt]v4) {$12$};
	\node[nodestyle] (v3)  at ([xshift=+30pt, yshift=50pt]v4) {$3$};
	\draw[edgestyle] (v12) -- (v4);
	\draw[edgestyle] (v3)  -- (v4);
\end{tikzpicture}
\subcaption{The Birkoff representation of $\caD$.\label{fig:Birkoff}}
\end{minipage}
\hspace{1em}
\begin{minipage}[b]{0.45\linewidth}
\centering
\begin{tikzpicture}[line width=1pt]
	\tikzstyle{nodestyle} = [circle, draw, inner sep=4pt];
	\tikzstyle{edgestyle} = [-{Triangle}];
	\node[nodestyle, minimum height=25pt] (v4) {$4$};
	\node[nodestyle, minimum height=25pt] (v12) at ([xshift=-30pt, yshift= 50pt]v4) {$12$};
	\node[nodestyle, minimum height=25pt] (v3)  at ([xshift=+30pt, yshift= 50pt]v4) {$3$};
	\node[nodestyle, minimum height=15pt] (s)   at ([xshift=-50pt, yshift=  0pt]v12) {$s$};
	\node[nodestyle, minimum height=15pt] (t)   at ([xshift=+50pt, yshift=-30pt]v4) {$t$};

	\draw[edgestyle] (v12) to node[left] {$+\infty$} (v4);
	\draw[edgestyle] (v3)  to node[right] {$+\infty$} (v4);
    \draw[edgestyle] (s)  to [bend right = 50] node[below, xshift=-20pt] {\small $-\alpha + \bmb(12)$} (v12);
    \draw[edgestyle] (s)  to [bend left = 50]  node[above] {\small $-\alpha + \bmb(3)$} (v3);
    \draw[edgestyle] (v4) to [bend right = 20] node[right, pos=0.1] {\small $\alpha - \bmb(4)$} (t);
	\path[opacity=0.2, fill] ([xshift=10pt,yshift=10pt]v12.north east) -- ([xshift=10pt,yshift=10pt]v4.north east) -- ([xshift=10pt,yshift=-10pt]v4.south east) -- ([xshift=-10pt,yshift=-10pt]v4.south west) -- ([xshift=-10pt,yshift=-10pt]s.south west) to node[yshift=18pt,opacity=0.6]{$S \cup \{s\}$} ([xshift=-10pt,yshift=10pt]s.north west) -- ([yshift=10pt]v12.north west);
\end{tikzpicture}
\subcaption{The directed graph $G$ and its minimum $st$-cut $S \cup \set{s}$. The arcs with zero capacity are ommited.\label{fig:st-cut}}
\end{minipage}
\caption{This figure illustrates the method used to solve $\min_{S \in \caD} \sum_{i \in S} (\alpha - \bmb(i))$ using minimum $st$-cut for $\caD = \{\emptyset, \{4\}, \{3,4\}, \{1,2,4\}, \{1,2,3,4\} \}$.\label{fig:modular-optimization}}
\end{figure}
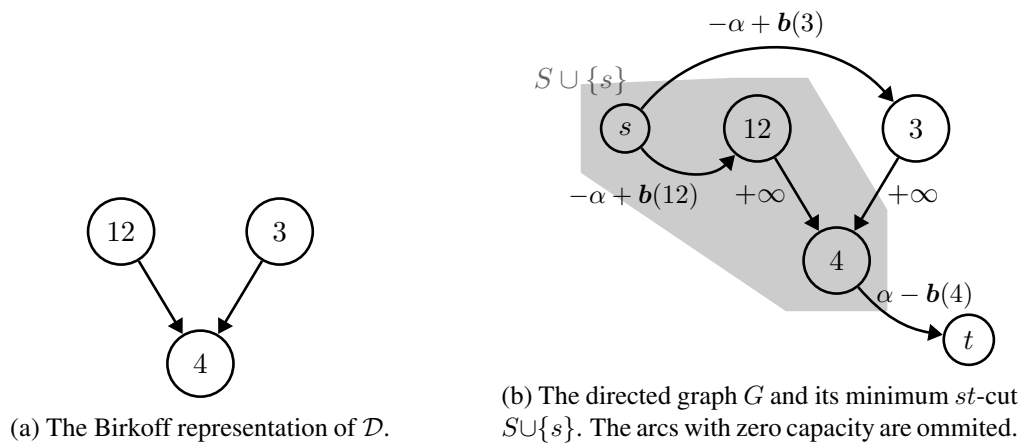

Furthermore, the capacity function $c$ satisfies the so-called \emph{GGT} structure~\citep{Gallo1989}\footnote{The original GGT structure assumes that the capacities of arcs from source $s$ are nondecreasing, whereas the capacities of arcs to sink $t$ are nonincreasing, while the others are constant. We can transform our capacity function to this setting by replacing $\alpha$ with $-\alpha$.};
the capacities of arcs from source $s$ are nonincreasing, whereas the capacities of arcs to sink $t$ are nondecreasing, while the others are constant.
For such a capacity function, all the values of $\alpha$ at which the value of minimum $st$-cuts change can be computed in $O(nm^2 \log\frac{n^2}{m})$ time using a parametric flow algorithm~\citep{Gallo1989}, where $n$ and $m$ are the number of vertices and arcs in $G$, respectively.

\begin{theorem}
    Assume that the Birkoff representation of $\caD$ is given as a directed graph with $n$ vertices and $m$ arcs.
    Then, there exists a strongly polynomial time algorithm for~\eqref{eq:minnorm-polyhedron} with time complexity $O(nm^2 \log\frac{n^2}{m})$.
    %In particular, this leads to $O(|V|^5)$ running time.
\end{theorem}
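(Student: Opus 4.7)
The plan is to assemble the pieces already developed in the preceding subsection and invoke a parametric max-flow algorithm. Since the regression problem reduces to~\eqref{eq:minnorm-polyhedron} with modular $H(S) = -\bmb(S)$, and since the lemmas above show that the minimizer of~\eqref{eq:regularizedLovasz} is $\bmz(v) = \max\bigset{\sup\{\alpha : v \in A^\alpha\},\,0\}$ where $A^\alpha$ solves the parametric problem~\eqref{eq:paramSFM}, it suffices to compute all distinct minimizers $A^\alpha$ as $\alpha$ ranges over $\bbR$, then read off $\bmz$ (and hence the optimal $\bmp^* = -\bmz$) from the breakpoints.

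First I would construct the auxiliary digraph $G = (U \cup \{s,t\}, A \cup \bar A)$ from the given Birkoff representation, equipped with the capacity function $c$ of~\eqref{eq:capacity}. The infinite capacities on the internal arcs $A$ enforce that any finite $st$-cut $S \cup \{s\}$ corresponds to a lower ideal $S \in \caD$; the capacities on arcs incident to $s$ and $t$ encode exactly $\sum_{i \in S}(\alpha - \bmb(i))$. Thus for each $\alpha$, minimum $st$-cuts in $G$ are in bijection with minimizers of~\eqref{eq:paramSFM}. This part is a direct application of Picard's reduction and requires only that $G$ has $n + 2$ vertices and $m + 2n$ arcs, which remains $O(n)$ and $O(m)$ up to constants.

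Next I would verify the GGT monotonicity: as $\alpha$ grows, the capacities $c(su) = \min\{-\alpha + \bmb(u), 0\}$ on source arcs are nonincreasing, the capacities $c(ut) = \min\{\alpha - \bmb(u), 0\}$ on sink arcs are nondecreasing, and all other capacities are constant. After the reparametrization $\alpha \mapsto -\alpha$ noted in the footnote, this matches the hypothesis of the parametric max-flow algorithm of \citet{Gallo1989}, which enumerates all breakpoints together with the corresponding minimum cuts in $O(nm^2 \log \tfrac{n^2}{m})$ total time.

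Finally I would extract $\bmz$ from the sequence of nested cuts $A^{\alpha_1} \supseteq A^{\alpha_2} \supseteq \cdots$ produced by the parametric algorithm: for each $v \in V$, $\bmz(v)$ equals the largest breakpoint $\alpha$ at which $v$ leaves the source side (truncated at $0$). This postprocessing is linear in the total output size and is dominated by the parametric flow cost. The main obstacle in writing out the argument cleanly is just checking the GGT reparametrization and confirming that the reduction from~\eqref{eq:minnorm-polyhedron} to the nonnegative-orthant problem~\eqref{eq:regularizedLovasz} is tight in the modular case; beyond that, the theorem follows by chaining the lemmas of Section~\ref{subsec:combinatorial} with the \cite{Gallo1989} runtime bound.
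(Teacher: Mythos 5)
Your proposal correctly assembles the pieces from Section~\ref{subsec:combinatorial}---the strong duality between~\eqref{eq:minnorm} and~\eqref{eq:regularizedLovasz}, the characterization of the minimizer $\bmz$ via the parametrized minimizers $A^\alpha$, the Picard min-cut reduction with the capacity function~\eqref{eq:capacity}, and the GGT parametric max-flow bound---which is exactly the implicit proof the paper intends. The theorem in the paper is stated without a separate proof precisely because it is this assembly, so your route matches the paper's.
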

Theorem~\ref{the:intro-regression} is immediate from the above theorem and the fact that we can construct the Birkoff representation of $\ker(F)$ in polynomial time as previously discussed.

\bibliography{main}

\appendix
%!TEX root=./main.tex

\section{Semi-supervised Learning}\label{sec:semi-supervised}
In this section, we prove Theorem~\ref{the:intro-semi-supervised}.
%extend Theorem~\ref{the:intro-system} to the setting where some part of $\bmx$ is given as the input. This extension makes us possible to apply our framework to semi-supervised learning.

We consider the problem of solving $\bmb \in L_F(\bmx)$ under constraints that $\bmx(v) = \tilde{\bmx}(v)$ for all $v \in T$ and $\bm\bmb(v) = \tilde{\bmb}(v)$ for all $v \in U$. To find such $\bmx$ and $\bmb$, we consider the following optimization problem.
\begin{align}
	\begin{array}{lll}
		\label{eq:semi-supervised-primal}
		\displaystyle \min_{\bmx \in \bbR^V,\bmeta \in \bbR^E} & \displaystyle \frac{1}{2} \| \bmeta \|^2 - \sum_{v \in U} \tilde{\bmb}(v) \bmx(v)& \\
		\text{subject to} & \displaystyle \bmx(v) = \tilde{\bmx}(v) & (v \in T)\\
		 & \displaystyle f_e(\bmx) \le \bmeta(e) & (e \in E)
	\end{array}
\end{align}
The following theorem shows that~\eqref{eq:semi-supervised-primal} can be solved in polynomial time, and an obtained optimal solution $\bmx$ satisfies $\bmb \in L_F(\bmx)$.
\begin{theorem}
	There is a polynomial time algorithm that solves \eqref{eq:semi-supervised-primal}.
	In addition, for an optimal solution $(\bmx^*,\bmeta^*)$ to~\eqref{eq:semi-supervised-primal}, $\bmx^*(v) = \tilde{\bmx}(v)$ holds for each $v \in T$, and there exists some $\bmb \in L(\bmx^*)$ such that $\bm\bmb(v) = \tilde{\bmb}(v)$ holds for each $v \in U$.
\end{theorem}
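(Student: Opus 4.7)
The plan is to mimic the argument for Theorem~\ref{the:general}, augmented to handle the linear equality constraints $\bmx(v)=\tilde{\bmx}(v)$ for $v \in T$. First, observe that \eqref{eq:semi-supervised-primal} is a convex problem: the objective is a sum of a quadratic in $\bmeta$ and a linear function in $\bmx$, the constraints $f_e(\bmx)\le \bmeta(e)$ are convex because each $f_e$ is the \Lovasz extension of a submodular function, and the remaining constraints are linear. A separation oracle is available by evaluating $\tilde{\bmx}(v)-\bmx(v)$ and by computing a subgradient of $f_e$ at the query point $\bmx$ via the greedy algorithm for $B(F_e)$. Hence the ellipsoid method solves \eqref{eq:semi-supervised-primal} in polynomial time, assuming feasibility; if the feasible region is empty, the ellipsoid method detects this and we report infeasibility.

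Next, I would invoke the KKT conditions at an optimal $(\bmx^*,\bmeta^*)$. Introduce a Lagrange multiplier $\bmmu(v)\in\bbR$ for each equality constraint $\bmx(v)=\tilde{\bmx}(v)$ (for $v\in T$) and $\bmphi(e)\ge 0$ for each inequality constraint $f_e(\bmx)\le\bmeta(e)$. The Lagrangian is
\begin{align*}
P(\bmx,\bmeta,\bmmu,\bmphi) = \tfrac{1}{2}\|\bmeta\|_2^2 - \sum_{v\in U}\tilde{\bmb}(v)\bmx(v) + \sum_{v \in T}\bmmu(v)\bigl(\tilde{\bmx}(v)-\bmx(v)\bigr) + \sum_{e \in E}\bmphi(e)\bigl(f_e(\bmx)-\bmeta(e)\bigr).
\end{align*}
Since the equality constraints are linear, the linear constraint qualification holds, so Theorem~\ref{thm:saddle} (after splitting each equality into two inequalities and taking the signed difference of the two resulting multipliers) gives a saddle point $(\bmx^*,\bmeta^*,\bmmu^*,\bmphi^*)$.

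The stationarity in $\bmeta$ gives $\bmphi^*=\bmeta^*$. Combining complementary slackness $\bmphi^*(e)(f_e(\bmx^*)-\bmeta^*(e))=0$ with $f_e(\bmx^*)\ge 0$ (as in the proof of Theorem~\ref{the:general}(4)), we get $\bmphi^*(e)=f_e(\bmx^*)$ for every $e \in E$. Finally, the saddle condition in $\bmx$ implies the subgradient inclusion
\begin{align*}
\sum_{e \in E}\bmphi^*(e)\,\partial f_e(\bmx^*) \;=\; \sum_{e \in E}f_e(\bmx^*)\,\partial f_e(\bmx^*) \;\ni\; \bmb,
\end{align*}
where $\bmb$ is the vector defined by $\bmb(v)=\tilde{\bmb}(v)$ for $v\in U$ and $\bmb(v)=\bmmu^*(v)$ for $v\in T$. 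By Definition~\ref{def:submodular-Laplacian} this means $\bmb\in L_F(\bmx^*)$, which establishes the theorem; feasibility gives $\bmx^*(v)=\tilde{\bmx}(v)$ for $v\in T$ automatically.

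The main technical care is with Theorem~\ref{thm:saddle}, stated for inequality constraints only: since the equality constraints here are linear, standard constraint qualifications apply, and the sign-unrestricted multiplier $\bmmu^*$ is precisely what lets us freely choose $\bmb(v)$ on the labeled vertices $v\in T$ to match the subgradient condition, while the multiplier of the inequality side recovers the prescribed values on $U$.
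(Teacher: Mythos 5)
Your proof is correct and arrives at the same conclusion, but the route is slightly different from the paper's. The paper first rewrites \eqref{eq:semi-supervised-primal} by substituting $\bmx(v) = \tilde{\bmx}(v)$ for $v\in T$ and eliminating $\bmeta$, so that the problem becomes an \emph{unconstrained} convex minimization of $\frac{1}{2}\sum_{e} f_e(\bmx)^2 - \sum_{v\in U}\tilde{\bmb}(v)\bmx(v)$ over the free variables $(\bmx(v))_{v\in U}$; the ellipsoid method is then applied to this unconstrained problem, and the first-order (subgradient) condition restricted to the $U$-coordinates directly yields $\bmw_e\in\partial f_e(\bmx^*)$ with $\sum_e f_e(\bmx^*)\bmw_e(v) = \tilde{\bmb}(v)$ on $U$, from which $\bmb := \sum_e f_e(\bmx^*)\bmw_e \in L_F(\bmx^*)$. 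You instead keep all constraints explicit and run the Lagrangian/KKT argument of Theorem~\ref{the:general}(4) with an additional free multiplier $\bmmu^*$ for the equality constraints. Both are valid; the paper's substitution is technically lighter (no equality multipliers, no care about Slater/linear constraint qualifications), while your version mirrors the proof of Theorem~\ref{the:general}(4) more faithfully and has the advantage of making the interpretation explicit: the multiplier $\bmmu^*(v)$ is precisely the value that $\bmb$ must take on the labeled vertex $v \in T$, which the paper leaves implicit in the phrase ``for some $\bmb$.'' One small caveat: when invoking Theorem~\ref{thm:saddle} you should note that Slater's condition still holds for the inequality constraints $f_e(\bmx)\le\bmeta(e)$ after fixing any feasible $\bmx$, since $\bmeta$ can be chosen strictly larger than $f_e(\bmx)$; otherwise the argument is complete.
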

\begin{proof}
	\eqref{eq:semi-supervised-primal} is equivalent to the following problem:
	\begin{align}
	\begin{array}{lll}
		\displaystyle \min_{\bmx \in \bbR^V} & \displaystyle \frac{1}{2} \sum_{e \in E} f_e(\bmx)^2  - \sum_{v \in U} \tilde{\bmb}(v) \bmx(v)& \\
		\text{subject to} & \displaystyle \bmx(v) = \tilde{\bmx}(v) & (v \in T)\\
	\end{array}
	\end{align}
	This problem can be viewed as an unconstraint convex programming with variables $(\bmx(v))_{v \in U}$. As we can compute the subdifferential of this objective function, we can solve~\eqref{eq:semi-supervised-primal} with the ellipsoid method in polynomial time.

	Let $(\bmx^*,\bmeta^*)$ be an optimal solution. Since $\bmx^*$ is a solution to~\eqref{eq:semi-supervised-primal}, $\bmx^*(v) = \tilde{\bmx}(v)$ holds for each $v \in T$. From the first-order optimality condition, there exists $\bmw_e \in \partial f_e(\bmx^*)$ for each $e \in E$ such that $\sum_{e \in E} \bmw_e(v) f_e(\bmx^*) - \tilde{\bmb}(v) = 0$ for all $v \in U$. It follows that $\bmb \in L_G(\bmx^*)$ for some $\bmb$ such that $\bm\bmb(v) = \tilde{\bmb}(v)$.
\end{proof}

To derive more efficient algorithms for special cases such as directed graphs and hypergraphs, we introduce a flow-like formulation, which is an extension of~\eqref{eq:flowlike-primal} and~\eqref{eq:flowlike-dual}.

Let $\caV_e$ be the set of extreme points of $B(F_e)$ for each $e \in E$.
Then, the original primal problem~\eqref{eq:semi-supervised-primal} is equivalent to:
\begin{align}
	\begin{array}{lll}
		\label{eq:semi-supervised-flowlike-primal}
		\displaystyle \min_{\bmx \in \bbR^V,\bmeta \in \bbR^E} & \displaystyle \frac{1}{2} \| \bmeta \|^2 - \sum_{v \in U} \tilde{\bmb}(v) \bmx(v)& \\
		\text{subject to} & \displaystyle \sum_{v \in U} \bmw(v) \bmx(v) + \sum_{v \in T} \bmw(v) \tilde{\bmx}(v) \le \bmeta(e) & (\bmw \in \caV_e, e \in E)
	\end{array}
\end{align}
with regarding $(\bmx(v))_{v \in U}$ as variables.
The dual problem is
\begin{align}
	\begin{array}{lll}
		\label{eq:semi-supervised-flowlike-dual}
		\displaystyle \min_{\bmphi \geq \bmzero} & \displaystyle \frac{1}{2} \sum_{e \in E} \left( \sum_{\bmw \in \caV_e} \bmphi(e, \bmw) \right)^2 - \sum_{e \in E, \bmw \in \caV_e} \bmphi(e, \bmw) \left( \sum_{v \in T} \bmw(v) \tilde{\bmx}(v) \right)& \\
		\text{subject to} & \displaystyle \sum_{e \in E, \bmw \in \caV_e} \bmphi(e, \bmw) \bmw(v) = \tilde{\bmb}(v) & (v \in U).\\
	\end{array}
\end{align}

The first constraint can be interpreted as a flow boundary constraint on $v \in U$.
In the cases of constant arity submodular transformations, including directed graphs and hypergraphs, the number of variables $(\bmphi(e, w))_{e \in E, w \in \caV_e}$ in the dual problem is bounded by a polynomial in $|V|$.
As discussed in Section~\ref{subsec:system-constant-arity}, if $F$ is given by a directed graph or a hypergraph, this dual problem can be reduced to the quadratic cost flow problem, which  can be solved in $O(|E|^4 \log |E|)$ time~\citep{Vegh:2012jm}.

%with the submodular Laplacian systems, the Frank-Wolfe algorithm can be applied.
%If $(F_e)_{e \in E}$ is directed graphs, this dual problem can be reduced to the quadratic cost flow problem and solved in $O(|E|^4 \log |E|)$ time~\citep{Vegh:2012jm}.

%!TEX root=./main.tex

\section{Triangle Inequality of Effective Resistance}\label{sec:effective-resistance}
    %\begin{align}
	%	\min_{\bmx,\bmeta} \frac{1}{2}\| \bmeta \|_2^2  - \langle \bmb, \bmx \rangle \text{ subject to } f_e(\bmx) \leq \eta(e) \quad (e \in E).
    %\end{align}
    %The corresponding Lagrangian is
    %\begin{align*}
    %    P(\bmx,\bmeta,\bmphi) &= \frac{1}{2}\|\bmeta\|_2^2 -\langle \bmb,\bmx\rangle - \sum_{e \in E} \bmphi(e) (f_e(\bmx) - \bmeta(e)) \\
    %                        &= \frac{1}{2}\|\bmeta\|_2^2 +\langle \bmphi, \bmeta \rangle - \left[ \langle \bmb, \bmx \rangle - \sum_{e \in E} \bmphi(e) f_e(\bmx) \right].
    %\end{align*}
    %Therefore we obtain the dual problem:
    %\begin{align}
	%	\max_{\bmphi} - \frac{1}{2}\| \bmphi \|_2^2 \text{ subject to } \sum_{e \in E} \bmphi(e) f_e(\bmx) - \langle \bmb, \bmx \rangle \ge 0 \text{ for all }\bmx \in \bbR^V.
    %\end{align}

% \begin{definition}{(Effective resistance)}
%   Let $u, v \in V$ be elements of $V$ and suppose~\eqref{eq:general-energy-minimization} is feasible for $\bmb = \bme_u - \bme_v$. The effective resistance $R_{uv}$ from $u$ to $v$ is defined to be the double of the optimal value of~\eqref{eq:general-energy-minimization}, i.e., $R_{uv} = \| \bmphi \|^2$ where $\bmphi$ is an optimal solution for~\eqref{eq:general-energy-minimization} in the case of $\bmb = \bme_u - \bme_v$.
% \end{definition}

%  For an undirected graph, the effective resistance from $u \in V$ to $v \in V$ is defined as $(\bme_u - \bme_v) L_F^{+} (\bme_u - \bme_v)$ in previous work. The following lemma shows our definition of the effective resistance is equivalent to this previous definition in the case of undirected graphs.
In this section, we prove Theorem~\ref{the:intro-triangle-inequality}.

We start with rephrasing effective resistance.
\begin{lemma}\label{lem:effective-resistance-rephrase}
  Let $F\colon 2^V \to \bbR_+^E$ be a submodular transformation.
  Suppose~\eqref{eq:general-energy-minimization} is feasible for $\bmb = \bme_u - \bme_v$, and let $\bmphi^* \in \bbR^E$ be an optimal solution.
  Then we have $R_F(u,v) = \| \bmphi^* \|^2$.
\end{lemma}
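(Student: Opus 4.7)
The plan is to read off $\|\bmphi^*\|^2$ from the strong duality between~\eqref{eq:general-primal} and~\eqref{eq:general-energy-minimization} (Theorem~\ref{the:general}(2)), express the same optimal value in terms of primal data, and finally identify this with $R_F(u,v)$ for $\bmb = \bme_u - \bme_v$. All the heavy lifting has been done in Section~\ref{subsec:system-general}; only bookkeeping remains.

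First I would invoke the KKT analysis carried out in the proof of Theorem~\ref{the:general}(4): if $(\bmx^*, \bmeta^*)$ is optimal for~\eqref{eq:general-primal} and $\bmphi^*$ is the associated dual optimum, then $\bmphi^*(e) = \bmeta^*(e) = f_e(\bmx^*)$ for every $e \in E$. In particular $\|\bmphi^*\|_2^2 = \sum_{e \in E} f_e(\bmx^*)^2$. The remark following Definition~\ref{def:submodular-Laplacian} gives $\sum_{e \in E} f_e(\bmx^*)^2 = \bmx^{*\top} L_F(\bmx^*)$, and combining this with $\bmb \in L_F(\bmx^*)$ from Theorem~\ref{the:general}(4) yields $\|\bmphi^*\|_2^2 = \langle \bmb, \bmx^* \rangle$.

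Next I would double-check this against strong duality. Evaluating the primal objective at $(\bmx^*, \bmeta^*)$ gives
\[
    \tfrac{1}{2}\|\bmeta^*\|_2^2 - \langle \bmb, \bmx^* \rangle = \tfrac{1}{2}\langle \bmb, \bmx^* \rangle - \langle \bmb, \bmx^* \rangle = -\tfrac{1}{2}\langle \bmb, \bmx^* \rangle,
\]
while the Lagrangian dual value at $\bmphi^*$, as computed in Section~\ref{subsec:system-general}, is $-\tfrac{1}{2}\|\bmphi^*\|_2^2$. Strong duality forces these to agree, reconfirming $\|\bmphi^*\|_2^2 = \langle \bmb, \bmx^* \rangle$.

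To finish, I would specialize to $\bmb = \bme_u - \bme_v$. By the definition in Section~\ref{sec:effective-resistance}, $\bmx^* = L_F^+(\bme_u - \bme_v)$ is the vector returned by Theorem~\ref{the:intro-system} for this right-hand side, so $\langle \bmb, \bmx^* \rangle = (\bme_u - \bme_v)^\top L_F^+(\bme_u - \bme_v) = R_F(u,v)$, giving $R_F(u,v) = \|\bmphi^*\|_2^2$. The only possible pitfall is the sign convention in the strong-duality statement of Theorem~\ref{the:general}(2) (the dual~\eqref{eq:general-energy-minimization} is written as a minimization of $\tfrac{1}{2}\|\bmphi\|_2^2$, whereas the Lagrangian dual objective is $-\tfrac{1}{2}\|\bmphi\|_2^2$), but this is resolved by the direct primal-side computation above and poses no real obstacle.
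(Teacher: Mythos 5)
Your proof is correct and follows essentially the same route as the paper: both rely on the complementary slackness relation $\bmphi^*(e) = f_e(\bmx^*)$ and the first-order condition $\bmb \in L_F(\bmx^*)$ from the proof of Theorem~\ref{the:general}(4), and then identify $\|\bmphi^*\|_2^2 = \langle\bmb,\bmx^*\rangle = (\bme_u-\bme_v)^\top L_F^+(\bme_u-\bme_v) = R_F(u,v)$. The only presentational difference is that you derive $\|\bmphi^*\|_2^2 = \langle\bmb,\bmx^*\rangle$ twice (first via $\sum_e f_e(\bmx^*)^2 = \bmx^{*\top}L_F(\bmx^*)$ and then again as a check via strong duality), whereas the paper uses only the strong-duality computation.
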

\begin{proof}
  Let $\bmx^* \in \bbR^V$ be an optimal solution to the primal problem~\eqref{eq:general-primal} for $\bmb = \bme_u - \bme_v$.
  From the strong duality, we have $$\frac{1}{2} \sum_{e \in E} f_e(\bmx^*)^2 - \langle \bme_u - \bme_v, \bmx^* \rangle = - \frac{1}{2} \| \bmphi^* \|^2.$$
  From the complementary slackness condition, we have $\varphi^*(e) = f_e(\bmx^*)$ for all $e \in E$.
  In addition, due to the first-order optimality condition, we have $L_F(\bmx^*) \ni \bme_u - \bme_v$.
  %Since $L_F$ is an undirected graph Laplacian and the primal is bounded, $\bmx^* = L_F^{+} (\bme_u - \bme_v)$.
  Substituting them to the above equation, we obtain
  \[
    \| \bmphi^* \|^2 = (\bme_u - \bme_v) L_F^{+} (\bme_u - \bme_v),
  \]
  and the claim holds.
\end{proof}

\begin{lemma}{(Triangle inequality of effective resistance)}
	Let $F\colon 2^E \to \bbR_+^E$ be a submodular transformation.
  Suppose that effective resistances $R_F(u,v)$ and $R_F(v,w)$ are bounded.
  Then, it holds that $$R_F(u,v) + R_F(v,w) \ge R_F(u,w).$$
\end{lemma}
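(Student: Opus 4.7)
The plan is to combine the primal (Dirichlet-type) characterization of $R_F$ with a lattice decomposition of the optimal potential for the $u$--$w$ system. By strong duality from Theorem~\ref{the:general}, for any $s,t\in V$ with $R_F(s,t) < \infty$ the following variational inequality holds:
\begin{align*}
  R_F(s,t) \;\geq\; 2\langle \bme_s - \bme_t,\, \bmx\rangle - \sum_{e\in E} f_e(\bmx)^2 \qquad \text{for every } \bmx\in \bbR^V,
\end{align*}
with equality at the primal optimum $\bmx^*$, for which additionally $\bmx^*(s) - \bmx^*(t) = \sum_e f_e(\bmx^*)^2 = R_F(s,t)$ (as already observed inside the proof of Lemma~\ref{lem:effective-resistance-rephrase}). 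Finiteness of $R_F(u,v)$ and $R_F(v,w)$ automatically implies $R_F(u,w)<\infty$, since adding the two dual-optimal vectors of \eqref{eq:general-energy-minimization} yields a dual-feasible vector for $\bmb = \bme_u - \bme_w$.

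Let $\bmz^*$ be a primal optimum for $\bmb = \bme_u - \bme_w$, set $c := \bmz^*(v)$, and define the componentwise truncations $\bmx^1 := \bmz^* \vee c\bmone$ and $\bmx^2 := \bmz^* \wedge c\bmone$. I will plug $\bmx^1$ into the variational inequality for $(s,t)=(u,v)$ and $\bmx^2$ into the one for $(s,t)=(v,w)$ and sum. This reduces the proof to the two claims
\begin{align*}
  \text{(i)}\quad &[\bmx^1(u) - \bmx^1(v)] + [\bmx^2(v) - \bmx^2(w)] \;\geq\; R_F(u,w), \\
  \text{(ii)}\quad &\sum_{e\in E} f_e(\bmx^1)^2 + \sum_{e\in E} f_e(\bmx^2)^2 \;\leq\; R_F(u,w).
\end{align*}

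Claim~(i) is a direct computation: $\bmx^1(u) - \bmx^1(v) = \max(\bmz^*(u),c) - c$ and $\bmx^2(v) - \bmx^2(w) = c - \min(\bmz^*(w),c)$, so their sum equals $\max(\bmz^*(u),c) - \min(\bmz^*(w),c) \geq \bmz^*(u) - \bmz^*(w) = R_F(u,w)$. Claim~(ii) is the place where the submodular structure is essential: the \Lovasz extension of any submodular function is itself submodular on $\bbR^V$ in the sense that $f_e(\bmx\vee\bmy) + f_e(\bmx\wedge\bmy) \leq f_e(\bmx)+f_e(\bmy)$; applying this with $\bmy = c\bmone$ and using $f_e(c\bmone) = cF_e(V) = 0$ from normalization gives $f_e(\bmx^1) + f_e(\bmx^2) \leq f_e(\bmz^*)$. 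Nonnegativity of $f_e$ together with $(a+b)^2 \geq a^2+b^2$ for $a,b\geq 0$ then yields $f_e(\bmx^1)^2 + f_e(\bmx^2)^2 \leq f_e(\bmz^*)^2$, and summing over $e$ gives (ii) via $\sum_e f_e(\bmz^*)^2 = R_F(u,w)$.

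Combining (i) and (ii) with the two variational inequalities gives $R_F(u,v)+R_F(v,w) \geq 2R_F(u,w) - R_F(u,w) = R_F(u,w)$, as desired. The crucial step—and the reason the ``na{\"\i}ve'' dual approach fails—is claim~(ii): simply adding the dual-optimal flows $\bmphi^{uv}+\bmphi^{vw}$ of \eqref{eq:general-energy-minimization} gives a feasible dual vector for the $u$--$w$ system, but its squared norm expands to $R_F(u,v)+R_F(v,w)+2\langle \bmphi^{uv},\bmphi^{vw}\rangle$ with a \emph{nonnegative} cross term (both vectors are componentwise nonnegative), so that route only yields the weaker bound $\sqrt{R_F(u,w)} \leq \sqrt{R_F(u,v)}+\sqrt{R_F(v,w)}$. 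The lattice decomposition on the primal side is precisely what supplies enough cancellation to upgrade to the stronger inequality claimed in the lemma.
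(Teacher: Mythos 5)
Your proof is correct, and it takes a genuinely different route from the paper's. The paper works on the \emph{dual} (flow) side: it takes the dual-optimal vectors $\bmphi_{uv}$ and $\bmphi_{vw}$ of~\eqref{eq:general-energy-minimization}, shows by a case analysis on whether $v\in S$ that the componentwise join $\bmphi_{uv}\vee\bmphi_{vw}$ is dual-feasible for $\bmb=\bme_u-\bme_w$, and then uses the elementary fact $\max(a,b)^2\leq a^2+b^2$ for nonnegative scalars to conclude $R_F(u,w)\leq\|\bmphi_{uv}\vee\bmphi_{vw}\|^2\leq\|\bmphi_{uv}\|^2+\|\bmphi_{vw}\|^2=R_F(u,v)+R_F(v,w)$. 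You instead work on the \emph{primal} (potential) side, decomposing the $u$--$w$ optimal potential $\bmz^*$ at the threshold $c=\bmz^*(v)$ into $\bmz^*\vee c\bmone$ and $\bmz^*\wedge c\bmone$ and plugging these into the Dirichlet-type variational inequalities for $(u,v)$ and $(v,w)$. Both proofs hinge on the same structural fact, namely that a lattice operation ($\vee$ in the paper, $\vee/\wedge$ truncation in yours) respects the problem structure; the paper pays for this with a set-indexed case analysis of dual feasibility, while you pay for it by invoking the $L^\natural$-type submodularity of the \Lovasz extension $f_e(\bmx\vee\bmy)+f_e(\bmx\wedge\bmy)\leq f_e(\bmx)+f_e(\bmy)$ (a correct and standard fact, though not stated in the paper's preliminaries — you may want to cite or prove it). One small inaccuracy in your closing remark: the ``naïve dual approach'' you dismiss is addition $\bmphi^{uv}+\bmphi^{vw}$, and you are right that this only gives the square-root triangle inequality; but the paper's dual argument is not that — it uses the join $\bmphi_{uv}\vee\bmphi_{vw}$, whose squared norm \emph{does} satisfy $\|\bmphi_{uv}\vee\bmphi_{vw}\|_2^2\leq\|\bmphi_{uv}\|_2^2+\|\bmphi_{vw}\|_2^2$ componentwise, so the full inequality is also reachable on the dual side.
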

\begin{proof}
	Suppose $(\bmx_{uv}, \bmphi_{uv})$, $(\bmx_{vw}, \bmphi_{vw})$ and $(\bmx_{uw}, \bmphi_{uw})$ are saddle points of the above Lagrangian~\eqref{eq:general-lagrangian} for $\bmb = \bme_u - \bme_v$, $\bmb = \bme_v - \bme_w$ and $\bmb = \bme_u - \bme_w$, respectively.
  Let $\bmphi_{uv} \vee \bmphi_{vw} \in \bbR^E$ be $\bmphi_{uv} \vee \bmphi_{vw}(e) = \max\set{\bmphi_{uv}(e), \bmphi_{vw}(e) }$.
  We show $\bmphi_{uv} \vee \bmphi_{vw}$ is a feasible solution to~\eqref{eq:general-energy-minimization} for $\bmb = \bme_u - \bme_w$.
  It is enough to show for any $S \subseteq V$, the constraint $$\sum_{e \in E} (\bmphi_{uv} \vee \bmphi_{vw})(e) F_e(S) - (\bme_u - \bme_w)(S) \ge 0$$ holds. If $u \not\in S$ or $w \in S$, the constraint holds since $(\bme_u - \bme_w)(S) \le 0$ and $F_e$ is non-negative for each $e \in E$. Hence we consider $S$ such that $u \in S$ and $w \not\in S$.
  When $v \in S$, we have
	\begin{align*}
		&\sum_{e \in E} (\bmphi_{uv} \vee \bmphi_{vw})(e) F_e(S) - (\bme_u - \bme_w)(S)
		\ge \sum_{e \in E} \bmphi_{vw}(e) F_e(S) - 1\\
		&= \sum_{e \in E} \bmphi_{vw}(e) F_e(S) - (\bme_v - \bme_w)(S)
		\ge 0.
	\end{align*}
	The last inequality holds because $\bmphi_{vw}$ is a feasible solution to~\eqref{eq:general-energy-minimization} for $\bmb = \bme_v - \bme_w$.

	Similarly, when $v \not\in S$, we have
	\begin{align*}
		&\sum_{e \in E} (\bmphi_{uv} \vee \bmphi_{vw})(e) F_e(S) - (\bme_u - \bme_w)(S)
		\ge \sum_{e \in E} \bmphi_{uv}(e) F_e(S) - 1\\
		&= \sum_{e \in E} \bmphi_{uv}(e) F_e(S) - (\bme_u - \bme_v)(S)
		\ge 0.
	\end{align*}

	Since $\bmphi_{uv} \vee \bmphi_{vw}$ is a feasible solution to~\eqref{eq:general-energy-minimization} for $\bmb = \bme_u - \bme_w$, by Lemma~\ref{lem:effective-resistance-rephrase}, we can show the triangle inequality as follows:
	\begin{align*}
		R_F(u,w) &= \|\bmphi_{uw}\|^2
		\le \|\bmphi_{uv} \vee \bmphi_{vw}\|^2
		\le \|\bmphi_{uv}\|^2 + \|\bmphi_{vw}\|^2
		= R_F(u,v) + R_F(v,w),
	\end{align*}
  and the claim holds.
\end{proof}

\end{document}